\newtheorem{theorem}{Theorem}
\newtheorem{lemma}{Lemma}
\newtheorem{proposition}{Proposition}
\newtheorem{definition}{Definition}
\DeclareMathOperator*{\argmin}{arg\!\min}
\DeclareMathOperator*{\PoA}{PoA}
\DeclareMathOperator*{\PoS}{PoS}
\newcommand{\N}{\mathbb{N}}
\newcommand{\badNash}{\ensuremath{\hat{S}}}
\newcommand{\Nash}{\ensuremath{\mathcal{S}_{\operatorname{eq}}}}
\newcommand{\daniel}[2][]
{\todo[color=yellow, #1]{#2}}
\newif\ifappendix
\newcommand{\gotoAppendix}[2]{
\ifappendix
	#1
\else
	#2
\fi
}
\begin{document}

\title{On the Price of Anarchy in Packet Routing Games with FIFO}

\date{}
\author[1]{Daniel Schmand\orcidlink{0000-0001-7776-3426}}
\author[1]{Torben Schürenberg\orcidlink{0009-0006-5947-0172}}
\author[2]{Martin Strehler\orcidlink{0000-0003-4241-6584}}
\affil[1]{University of Bremen, Germany, \textit {\{schmand,torsch\}@uni-bremen.de}}
\affil[2]{Westsächsische Hochschule Zwickau, Germany, \textit{martin.strehler@fh-zwickau.de}}
\maketitle              
\begin{abstract}
\noindent We investigate packet routing games in which network users selfishly route themselves through a network over discrete time, aiming to reach the destination as quickly as possible. Conflicts due to limited capacities are resolved by the first-in, first-out (FIFO) principle.
Building upon the line of research on packet routing games initiated by Werth et al.~\cite{WERTH201418}, we derive the first non-trivial bounds for packet routing games with FIFO. 
Specifically, we show that the price of anarchy is at most 2 for the important and well-motivated class of uniformly fastest route equilibria introduced by Scarsini et al.~\cite{DBLP:journals/ior/ScarsiniST18} on any linear multigraph. We complement our results with a series of instances on linear multigraphs, where the price of stability converges to at least $\frac{e}{e-1}$. 
Furthermore, our instances provide a lower bound for the price of anarchy of continuous Nash flows over time on linear multigraphs which establishes the first lower bound of $\frac{e}{e-1}$ on a graph class where the monotonicity conjecture is proven by Correa et al.~\cite{DBLP:journals/mor/CorreaCO22}.

\;\newline
\noindent\textbf{Keywords:} Competitive Packet Routing, FIFO, Price of Anarchy, Price of Stability, Makespan Objective.
\end{abstract}
\clearpage
\section{Introduction}
\subsection{Motivation}
Selfish routing games, also known as competitive routing games, have been the subject of extensive study due to their wide-ranging applications in traffic and communication networks. These games involve multiple agents, such as road users or data packets, competing for limited resources like network paths and bandwidth. When considering road traffic, the idea of modeling with a continuous flow model appears unrealistic. Given that vehicles are indivisible and possess substantial size, the applicability of particles that can be fractionated at will seems dubious with respect to capturing all relevant dynamics. Consequently, there has been a recent surge in the study of atomic, dynamic flow models.

Werth et al.~\cite{WERTH201418} considered such a model with deterministic queues, which also ensures the first-in, first-out (FIFO) property, from a game-theoretic perspective. The authors examine two variants, each pertaining to distinct objective functions. For narrowest flows, the cost of a user is the most expensive resource on her path and users try to minimize these bottlenecks. However, it is NP-hard to determine according flows and the price of anarchy (PoA) is tightly bounded by the number of players. Conversely, when dealing with the \emph{makespan} objective function, where every user strives to minimize their arrival time, equilibria are easy to compute in the single commodity case. However, the PoA has been left open. Since the introduction of this model by Werth et al., numerous variants have been studied, but the PoA in the original model is still an unresolved problem.

 This paper aims to take a step towards addressing this gap. We find a constant upper bound of 2 for the class of linear multigraphs regarding the makespan objective and provide a sequence of linear multigraphs that converge to a price of stability ($\PoS$) and therefore also a PoA of at least $\frac{e}{e-1}$.

\subsection{Informal Model Description}
Prior to delving into a comprehensive description of the model in Section~\ref{sec:prelim}, we offer an informal overview that is intended to provide a more effective categorization within the multifaceted realm of routing games.

We are examining a dynamic, competitive routing game involving atomic players on a directed graph $G=(V,E)$. Each edge $e \in E$ is equipped with an integral transit time $\tau(e) \in \N_{>0}$. Each player routes an unsplittable packet of unit size through this network from origin $s$ to destination $d$, that is, the players choose an $s-d$ path, start at time zero and try to reach $d$ as quickly as possible, i.e., players aim to minimize the arrival time at $d$. Furthermore, each edge is equipped with a point queue. At any point in time arbitrarily many packets may enter such a queue, but the queue has a limited outflow rate, that is at most $\nu_e$ packets may leave the queue at a particular time step. These queues operate according to the FIFO principle, ensuring that packets exit in the same sequence they entered. There is no limit to the maximum storage capacity of these queues. Consequently, the total latency on an edge is composed of the transit time and any potential waiting time in the queue.

In particular, we consider so-called uniformly fastest route (UFR) equilibria that have also been analyzed by~\cite{DBLP:journals/ior/ScarsiniST18}. Such a UFR equilibrium is achieved if for every player and for every node $v$ on its route, there exists no other route such that this player can arrive earlier at $v$. In other words, a UFR equilibrium also takes arrival time at intermediate nodes into account.
For the social optimum, we consider the makespan of the problem, that is, assuming all players start at time zero, we minimize the arrival time of the last player.

\subsection{Related Literature}

The diversity of routing games mirrors the breadth of their applications. We will give a brief overview of the origin and development, focusing particularly on the most relevant results for the present work.

A shared characteristic across all games is that each player possesses a utility function, typically characterized by parameters such as latency\footnote{In the literature latency is sometimes also called travel time.}~(see, e.g.,~\cite{braess1968paradoxon,kochskutella2011,pigou,DBLP:journals/ior/ScarsiniST18}, residence time~\cite{cao2022bounding}, bottleneck~\cite{WERTH201418}, or arrival penalties\footnote{ Deviation from the desired arrival time is often taken into account as an additive penalty in the objective of dynamic traffic assignment models (DTA) used in the traffic community.}~\cite{HAN201317} and chooses a path (strategy) from its origin to its destination. A state in the game is referred to as a (pure) Nash equilibrium when no individual player can decrease the private utility function by unilaterally altering their route.

Equilibrium solutions are often compared to a system optimum with respect to some objective function. 
Common objectives include minimizing the total latency (sum over all road users)~\cite{harks2018competitive}, the total completion time (also known as makespan)~\cite{WERTH201418}, the (average) delay~\cite{DBLP:journals/ior/ScarsiniST18}, 
or the throughput~\cite{kochskutella2011}. The ratio between the worst equilibrium and the system optimum is referred to as the price of anarchy ($\PoA$). In addition, we also consider the price of stability ($\PoS$), which refers to the ratio between the best equilibrium and the system's optimal state.  

In the categorization of such competitive routing games, it is necessary to make several differentiations. Initially, the games can be classified as either static or dynamic. In a static setup, players choose their path, and interference arises when both paths share the same edge. This interference can be attributed to load-dependent latencies~\cite{roughgarden2005selfish}, or due to the edge capacities that constrain the total flow on an edge~\cite{correa2004selfish}. Pigou's two link network~\cite{pigou} and Braess’ Paradox~\cite{braess1968paradoxon} are pioneering examples in the realm of static games.

Conversely, dynamic games incorporate the element of time, with particles navigating the network in a time-sensitive manner. Typically, the subsequent player may experience delays due to the preceding player, as the former must wait until the latter clears the next edge. Furthermore, it is common to require adherence to the FIFO principle. Moreover, the capacity typically constrains the flow rate (measured in flow units per time unit), rather than the aggregate flow passing through an edge. A notable early and often adapted example is Vickrey’s queuing model~\cite{vickrey1969congestion}, which introduced point queues of no physical space and factored in varying departure times. 
Since then, the game-theoretic aspects of flows over time have been a topic of extensive discussion within the traffic community. From the perspective of algorithmic game theory, Koch and Skutella~\cite{kochskutella2011} have demonstrated that Nash flows over time with the underlying deterministic queuing model can be interpreted as a sequence of specific static flows. 

The second differentiation pertains to non-atomic versus atomic games. In non-atomic flows, one can envision an infinite number of players, each controlling an infinitesimally small packet. 
On the other hand, atomic flows, as introduced by Rosenthal~\cite{rosenthal1973network}, are characterized by each entity controlling a substantial portion of the total flow. Typically, these games involve a finite number of players, each with an unsplittable packet of unit size.

Non-atomic models are relatively well-understood due to their amenability to analytical methods. 
For example, in non-atomic models, the value of an equilibrium flow is often unique~\cite{cominetti2011existence}.
On the other hand, the $\PoA$ is still an open question in many dynamic scenarios. A significant advancement was made by Correa et al.~\cite{DBLP:journals/mor/CorreaCO22}, where they established that if one is permitted to restrict the inflow rate of the equilibrium to the optimum flow’s initial inflow rate, the $\PoA$ concerning the makespan is bounded by $\frac{e}{e-1}$. Unfortunately, these findings are contingent upon a monotonicity conjecture~\cite{DBLP:journals/mor/CorreaCO22}. Despite its intuitive appeal, this conjecture has only been validated for relatively straightforward graph classes, such as linear multigraphs, to date.

The suitability of non-atomic models, particularly for depicting scenarios such as road traffic, is a subject of ongoing debate, since it is unclear whether particle size is irrelevant in such applications.
A game-theoretic approach based on a dynamic, discrete model with point queues was introduced by Werth~\cite{WERTH201418}.
Subsequent to this approach, numerous variants of atomic routing games have been explored, given that these models necessitate the incorporation of additional concepts or intricate modeling details. A particular aspect is simultaneity: what happens when two players want to use the same edge at the same time? This necessitates a concept for parallel processing (fair time sharing, see~\cite{hoefer2011competitive}) or tie-breaking rules. For the latter, predefined priorities on players~\cite{harks2018competitive,WERTH201418} or edges~\cite{DBLP:conf/sigecom/CaoCCW17,scheffler2022routing,WERTH201418} are typically employed. 

Atomic models are often more challenging to analyze since they usually do not have unique equilibria, but some results were obtained for special setting. For example, Scarsini et al.~\cite{DBLP:journals/ior/ScarsiniST18} studied a game where the inflow occurs in generations of players, also emphasizing the relevance of UFR equilibria. Similarly, Cao et al.~\cite{DBLP:conf/sigecom/CaoCCW17} studied a game where the inflow rate never exceeds the network capacity, adding the concept of dynamic route choices.

However, the relationship between discrete and continuous routing models remains partially elusive. When considering system optimal solutions for a single source and a single sink, the similarities are pronounced. Already Ford and Fulkerson~\cite{ford1958constructing} pioneered the computation of dynamic flows for discrete time based on static flow computations. Fleischer and Tardos~\cite{DBLP:journals/orl/FleischerT98} successfully addressed this problem for continuous time. A comprehensive summary is available by Skutella~\cite{DBLP:conf/bonnco/Skutella08}. The situation becomes more intricate when examining equilibria. A Nash equilibrium in the continuous case is an equilibrium in the discrete case, if and only if the continuous flow can be interpreted as a flow of discrete packets, which is not possible in general. Conversely, while every state in the discrete setting can be interpreted as a feasible continuous flow~\cite{DBLP:journals/orl/FleischerT98}, the equilibrium property is often lost. That is, a Nash equilibrium in the discrete case does not necessarily translate to an equilibrium in the continuous case.

\subsection{Our Contribution}

We consider an atomic dynamic routing game on linear multigraphs\footnote{This graph class is also known in the literature as chain-of-parallel networks~\cite{DBLP:conf/sigecom/CaoCCW17,DBLP:journals/ior/ScarsiniST18}.}, where we have a linearly ordered node set $V=\{v_1,\dots,v_m\}$ and the edge set consists of multiple edges $(v_j,v_{j+1})$. For this graph class, we prove the $\PoA$ to be at most 2, presenting the first result of a constant $\PoA$ on a graph class completely independent of the number of players or the network size. Specifically, we do not impose artificial restrictions on the inflow rate of the network. Furthermore, we show that whenever players have the choice between multiple quickest paths, the sum of queues in the network is maximized when players greedily opt for the longest queues available.

Moreover, we show that there are instances of linear multigraphs, where the $\PoS$ converges to at least $\frac{e}{e-1}$, indicating that discrete Nash flows even on such a restricted graph class are non-trivial. Furthermore, this result aligns with the upper bound for the $\PoA$ in a non-atomic routing game investigated by Correa et al.~\cite{DBLP:journals/mor/CorreaCO22} where the inflow rate is restricted to that of an optimal flow and a certain monotonicity conjecture holds. 
We show that our instances can be translated to the setting of Correa et al., thereby establishing a lower bound of $e/(e-1)$ on the $\PoA$ in their context. With the monotonicity conjecture validated for these instances by the authors of~\cite{DBLP:journals/mor/CorreaCO22}, we achieve the first tight bound on the $\PoA$ for a non-trivial class of graphs concerning Nash flows over time.

This paper is organized as follows. Upon establishing the required notation, we proceed to examine system optimal flows within our framework in Section~\ref{sec:understandingopt}. Subsequently, we demonstrate the unique structure of the least favorable UFR equilibrium and establish an upper bound of 2 on the $\PoA$ on linear multigraphs in Section~\ref{sec:poa}. We wrap up our findings in Section~\ref{sec:pos} by formally showing a strong connection to the dynamic model and by establishing a lower bound on the $\PoS$ of $\frac{e}{e-1}$ for this class of graphs. Please note that some proofs have been omitted and are provided in the appendix.
\section{Preliminaries}\label{sec:prelim}
We consider the classic packet routing game introduced by~\cite{WERTH201418}. Formally, a packet routing game $\Gamma$ is given by the tuple $\Gamma =(N, G=(V,E), \tau: E \rightarrow \N_{>0},$ $s \in V, d \in V)$,
where $N=[n]\coloneq \{1, \dots, n\}$ with $n\in \N_{>0}$ is the player set, $G$ is a multigraph with nodes $V$ and directed edges $E$. Additionally, $\tau(e) \in \N_{>0}$ denotes the transit time of an edge $e \in E$. Each player has one packet located at the source node $s \in V$ and aims to send this packet as fast as possible to the destination node $d$.

In this work, we focus on packet routing games on \emph{linear multigraphs} and show the first non-trivial bounds on the $\PoA$ for packet routing games with FIFO policy. A linear multigraph for $s\neq d$ is given by $G=(V,E)$ with \mbox{$V=\lbrace s=v_0, v_1, \ldots , v_m = d\rbrace$}  and $j=j'+1$ for all edges $(v_{j'},v_j)\in E$. We call all edges from $v_{j-1}$ to $v_{j}$ together with the nodes $v_{j-1}$ and $v_{j}$ the $j$-th layer of the graph. 
We assume the edges $e^j_1, e^j_2, \ldots$ in each layer $j$ to be ordered with respect to their transit times $\tau(e^j_1)\leq \tau(e^j_2)\leq \ldots$. 
We denote the set of all linear multigraphs by $\mathcal{G}$.
 
Since each player is associated with a single packet, we will refer to the packet of a player as the player itself. Given $G$, $s$, and $d$, the strategy space $\mathcal{S}\coloneqq \mathcal{P}^n$ for the players is given by all simple $s-d$ paths $\mathcal{P}$ in $G$. A tuple $S=(P_{(1)}, \dots, P_{(n)}) \in \mathcal{S}$ is called a state of the game. Each state $S \in \mathcal{S}$ induces a network loading in the following sense.

\paragraph{Network Loading:}
Given a state $S$ of the game, we define the positions of every player at any point in time by the following algorithm: 
We initialize at time $t=0$ all queues $(q_e)_{e\in E}$ to be empty and position all packets at $s$. We add each player to the queue of the first edge in their paths. Players in queues are ordered according to FIFO, i.e., by their arrival time at $e$, where we break ties according to the player's index in favor of the player with the lower index.
We now iterate over $t$ until all packets have arrived at $d$:
For each edge $e$ we add all players $i$ with $e \in P_{(i)}$ who have left the queue of their previous edge $e'$ on their path at time $t-\tau(e')$ to queue $q_e$. Subsequently, for every edge $e$ with a non-empty queue, we remove the first player from the queue.\footnote{Note that we assume edge capacities to be equal to 1 to simplify notation, but we can extend the model to arbitrary edge capacities as we discuss as we discuss in Appendix~\ref{sec:edgcap}.} Hence, the player arrives at $\tau(e) + t$ at the head of $e$. If a player $i$ is both added to $q_e$ and removed from $q_e$ at the same time $t$, we say \emph{player $i$ does not queue on $e$}. Furthermore, when a player leaves the queue of an edge at time $t$, we refer to this edge as \emph{used (at time $t$)}. Note that this procedure can be turned into a polynomial time algorithm by keeping track and iterating only over relevant times $t$, where players leave queues.

We now introduce some additional notation for the network loading:
The \emph{waiting time} $w_e^i(S)$ of player $i$ on edge $e$ is defined as the time that player $i$ spends in queue $q_e$. The \emph{latency} $l_e^i(S)$ of player $i\in N$ on edge $e \in E$ is defined by $l_e^i(S) \coloneqq \tau(e) + w^i_e(S)$. 
The \emph{workload} $l_{e}(S,t)$ on an edge $e$ at time $t$ under the strategy profile $S$ is defined as the latency that a fictional player entering $e$ at time $t$ would experience on that edge if she had the largest index among all players present at $q_e$ at that time.
For the chosen $s-d$ path $P_{(i)} = (e_1,\ldots, e_m)$ of player $i$, the arrival time $a_{v_j}^i(S)$ of $i$ at the head $v_j$ of the edge $e_j$ is given by $a_{v_j}^i(S) = \sum_{r=1}^{j} l_{e_r}^i(S)$. For a fixed strategy profile $S$, this yields a uniquely defined \emph{arrival pattern} $a_v(S) = (a_v^i(S))_{i \in N}$ for every node $v\in V$, which we interpret as an ordered vector of the arrival times of all players for that node. For $v=s$, we set $a_s(S)= 0^n\coloneq(0,\ldots,0)\in \N_{\geq 0}^n$, since all players start at time $0$ at $s$. We call the number of players that arrive at node $v$ at time $t$ the \emph{inflow of $v$ at time $t$}. We refer to all players with arrival time $t$ at node $v$ as the \emph{generation $t$ at $v$}. We define the \emph{completion time} of player $i\in N$ to be $C_i(S)\coloneq a_{d}^i(S)$ and the (total) \emph{completion time} of the state to be $C(S)\coloneq \max_{i\in N} C_i(S)$. Since the network loading algorithm uniquely determines the positions of all players at any given time, we can compute all these values through a post-processing step.

\begin{figure}[t]
    \centering
        \tikzstyle{vertex}=[circle,fill=black!25,minimum size=20pt,inner sep=0pt]
        \tikzstyle{smallvertex}=[circle,fill=black,minimum size=6pt,inner sep=0pt]
        \tikzstyle{edge} = [draw,thin,->]
        \tikzstyle{weight} = [font=\small]
        \tikzstyle{selected edge} = [draw,line width=3pt,-,red!50]
        \tikzstyle{big edge} = [draw,line width=2pt,->,black]

        \begin{tikzpicture}[scale=0.55,auto,swap]
            \node[vertex](s1) at (0,0){$s$};
            \node[vertex](s2) at (3,0){$v$};
            \node[vertex](s3) at (7,0){$d$};

            \draw [edge] (s1) to[out=45,in=135, distance=1cm ] node[weight,above,black]{$\tau(e_1^1)=1$} (s2);
            \draw [edge] (s1) to[out=-45,in=-135, distance=1cm ] node[weight,below=0.2,black]{$\tau(e_2^1)=2$} (s2);

            \draw [edge] (s2) to[out=0,in=180, distance=1cm ] node[weight,below=0.25,black]{$\tau(e_1^2)=2$} (s3);

            \node[draw, fill=white] (P1) at (3.75,0) {$1$};
            \node[draw, fill=white] (P3) at (0.6,0.6) {$3$};
            \node[draw, fill=white] (P2) at (1.5,-1) {$2$};
        \end{tikzpicture}
\hspace{1cm}
        \begin{tikzpicture}[scale=0.55,auto,swap]
            \node[vertex](s1) at (0,0){$s$};
            \node[vertex](s2) at (3,0){$v$};
            \node[vertex](s3) at (7,0){$d$};

            \draw [edge] (s1) to[out=45,in=135, distance=1cm ] node[weight,above,black]{$\tau(e_1^1)=1$} (s2);
            \draw [edge] (s1) to[out=-45,in=-135, distance=1cm ] node[weight,below=0.2,black]{$\tau(e_2^1)=2$} (s2);

            \draw [edge] (s2) to[out=0,in=180, distance=1cm ] node[weight,below=0.25cm,black]{$\tau(e_1^2)=2$} (s3);

            \node[draw, fill=white] (P1) at (5,0) {$1$};
            \node[draw, fill=white] (P2) at (3.75,0) {$2$};
            \node[draw, fill=white] (P3) at (3.75,0.75) {$3$};
        \end{tikzpicture}
    \caption{An example to illustrate the notation. The left figure depicts the positions of players at $t=1$, the right figure for $t=2$.}
    \label{fig:SimpleExample}
\end{figure}
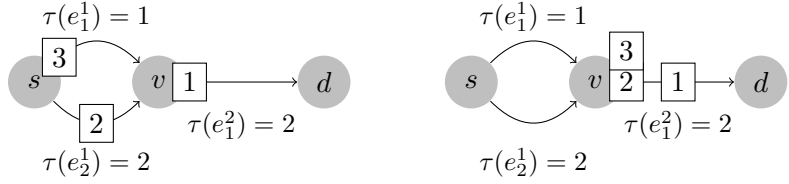

\paragraph{Social Objective:} 
The social objective is to minimize the completion time $C(S)$ among all $S\in \mathcal{S}$.
A state $S^*\in\mathcal{S}$ is called an \emph{optimal state} for the game if $S^* \in \argmin_{S \in \mathcal{S}} C(S)$.

\paragraph{Example:} 
To illustrate notation, we refer the reader to Figure~\ref{fig:SimpleExample}. For the player set $N=[3]$ and the strategies $S=((e_1^1, e_1^2),(e_2^1, e_1^2),(e_1^1, e_1^2))$ we obtain the waiting times $w_{e_1^1}^1(S)=w_{e_2^1}^2(S)=0$, $w_{e_1^1}^3(S)=1$ in the first layer. The latencies in the first layer are $l_{e_1^1}^1(S)= 1, l_{e_2^1}^2(S)=l_{e_1^1}^3(S)=2$ concluding in the workloads of $l_{e_1^1}(S,0)= l_{e_2^1}(S,0)=3$ and the arrival pattern $a_v(S)=(1,2,2)$ at $v$.
 For the second layer we obtain waiting times $w_{e_1^2}^1(S)=w_{e_1^2}^2(S)=0$, $w_{e_1^2}^3(S)=1$, latencies $l_{e_1^2}^1(S)= l_{e_1^2}^2(S)=2$, $l_{e_1^2}^3(S)=3$, workloads $l_{e_1^2}(S,0)=2$, $l_{e_1^2}(S,1)=3$, $l_{e_1^2}(S,2)=4$, $l_{e_1^2}(S,3)=3$ and the arrival pattern at $d$ is $a_d(S)=(3,4,5)$ with completion time $C(S)=5$.

\paragraph{Equilibria:}
A state $S=(P_{(1)}, \dots, P_{(n)}) \in \mathcal{S}$ is called a uniformly fastest route \emph{(UFR) equilibrium} if for every player $i\in N$, $P\in\mathcal{P}$ and node $v\in P_{(i)}\cap P$ it holds that $a_{v}^i(S)\leq a_{v}^i(S')$ for $S'=(P_{(1)}, \dots, P_{(i-1)}, P, P_{(i+1)}, \dots, P_{(n)})$.
 This means that in a UFR equilibrium, no player can achieve a better arrival time at any node $v$ on her chosen path by unilaterally altering her strategy to another path. Following the work of Scarsini et al.~\cite{DBLP:journals/ior/ScarsiniST18}, we believe this definition aptly
reflects the self-interested behavior of players in road traffic and we consider only this kind of equilibria, here\footnote{Details can be found in Appendix \ref{sec:equi}.}. The set of all such UFR equilibria is denoted by $\Nash$. To improve readability, we often drop the term \emph{UFR} and simply call them \emph{equilibria} in the rest of the paper. It should be noted that this set is not empty. Assume all players select their paths sequentially in the order of their indices, starting with the player with the smallest index, and every player chooses a uniform fastest route given the choices of the players with lower indices. Since none of the later players can arrive at an intermediate node before a player with lower index, no player can displace an earlier player and we indeed obtain an equilibrium.

We start by observing that the completion time in any equilibrium of our packet routing game is realized by the player with the highest index. 
\begin{lemma}\label{lemma_order}
    For every packet routing game $\Gamma$ on a linear multigraph and $S\in \Nash$ it holds that $C(S)= C_n(S)$. 
\end{lemma}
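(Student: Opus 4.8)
The plan is to prove a stronger, more structured statement from which the lemma follows immediately: that in any UFR equilibrium the arrival times at every node are monotone in the player index. Formally, I would show by induction on the layer index $j$ that $a_{v_j}^i(S) \le a_{v_j}^k(S)$ for all players $i < k$. Since in a linear multigraph every $s$--$d$ path passes through every node $v_j$, the quantity $a_{v_j}^i(S)$ is well defined for all players and all layers, so this statement makes sense; evaluating it at $j = m$ (i.e.\ at $v_m = d$) gives $C_i(S) \le C_n(S)$ for every $i$, which is exactly $C(S) = C_n(S)$. The base case $j = 0$ is immediate since $a_s^i(S) = 0$ for all $i$.

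For the inductive step, fix $i < k$ and assume $a_{v_{j-1}}^i(S) \le a_{v_{j-1}}^k(S)$. Let $e_i$ and $e_k$ denote the layer-$j$ edges used by $i$ and $k$ in $S$, and write $t_i := a_{v_{j-1}}^i(S) \le a_{v_{j-1}}^k(S) =: t_k$. If $e_i = e_k$, the claim is easy: since $i$ enters the common queue no later than $k$ and ties are broken in favour of the lower index $i$, the FIFO rule places $i$ ahead of $k$ in the queue, so $i$ is released no later than $k$ and hence reaches $v_j$ no later (same transit time); no use of the equilibrium property is needed here.

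The hard case, and the main obstacle, is $e_i \ne e_k$, where monotonicity can only come from the equilibrium condition. Here I would consider the unilateral deviation $S'$ in which player $i$ keeps its path up to $v_{j-1}$ but switches to $e_k$ in layer $j$. By causality of the network loading (the loading of layers up to $j-1$ is determined before layer $j$ and is unaffected by $i$'s choice in layer $j$), player $i$ still reaches the tail of layer $j$ at time $t_i$ in $S'$, and every other player's tail-arrival time, in particular that of all users of $e_k$, is unchanged. Thus the queue of $e_k$ in $S'$ is exactly its queue in $S$ with player $i$ additionally inserted at the position dictated by arrival time $t_i \le t_k$ and index $i < k$. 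The key claim is then that $i$ is released from $e_k$ in $S'$ no later than $k$ is released from $e_k$ in $S$. I expect to establish this by a monotonicity argument for the unit-rate FIFO queue: every player served ahead of $i$ in $S'$ is also served ahead of $k$ in $S$ (it arrives before $t_i \le t_k$, or ties out ahead of the even smaller index $i$), so $i$ waits behind a sub-multiset of the players that $k$ waits behind and arrives no later, whence its release time is dominated by $k$'s. Concretely this can be made rigorous through the closed form $\max_{\emptyset \ne T}(\min_{c \in T} a_c + |T| - 1)$ for the release time of the last customer of a set $T$ of arrivals, comparing term by term and, in the case $i \in T$, substituting $k$ (with $a_k = t_k \ge t_i$) for $i$ so that the minimum arrival does not decrease while the cardinality is preserved. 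Adding the transit time $\tau(e_k)$ yields $a_{v_j}^i(S') \le a_{v_j}^k(S)$, and the UFR property applied at node $v_j$ gives $a_{v_j}^i(S) \le a_{v_j}^i(S')$; chaining the two inequalities closes the induction and hence proves the lemma.
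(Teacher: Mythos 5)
Your proposal is correct and follows essentially the same route as the paper: both prove the stronger monotonicity statement $a_v^i(S)\leq a_v^k(S)$ for $i<k$ layer by layer, and in the inductive/critical step let player $i$ deviate to the edge of the later player, observe that $i$ enters that queue no later (arrival time and tie-breaking both favour $i$) while all other arrivals at that edge are unchanged, and conclude via the UFR condition at $v_j$. The paper phrases this as a minimal-counterexample contradiction for consecutive indices $i,i+1$ and asserts the queue-release comparison tersely, whereas you argue directly for general $i<k$ and justify the release-time domination with the busy-period formula, but these are presentational differences only.
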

\begin{proof}
We will show the even stronger statement that $a_{v}^i(S)\leq a_{v}^{i+1}(S)$ at every node $v\in V$. Assume this would not hold. Let $v_j$ be the node with smallest index with $a_{v_j}^i(S)> a_{v_j}^{i+1}(S)$. Assume that player $i$ would change her strategy only in this layer by choosing the same edge $e$ as player $i+1$. Since $a_{v_{j-1}}^i(S)\leq a_{v_{j-1}}^{i+1}(S)$ the player $i$ enters the queue $q_e$ not later than player $i+1$. Since the state is the same up to node $v_{j-1}$ the time that other players that use edge $e$ arrive at edge $e$ does not change. Hence, player $i$ is removed from the queue not later than player $i+1$ is removed in state $S$ from this queue and thus arrives strictly earlier at $v_j$ than in $S$. This contradicts $S\in \Nash$.
\end{proof}
\noindent In essence, the proof of Lemma~\ref{lemma_order} also shows that all equilibria share a common characteristic with the specially constructed equilibrium above: all players arrive at every node and in particular at the destination $d$ in the order of their indices.

\paragraph{Price of Anarchy and Price of Stability:}
In terms of the social objective, our goal within the game $\Gamma$ is to minimize the completion time $C(S)$ across all $S\in \mathcal{S}$, i.e., the arrival time of the player arriving last at the destination node $d$. Since the players only optimize their own arrival times and not the social objective function, this equilibrium might not be efficient with respect to this social objective function.
Two popular measures of inefficiency are the price of anarchy ($\PoA$) and the price of stability ($\PoS$) which for a given packet routing game $\Gamma$ and an optimal state $S^*$ are defined by
\begin{align*}
    \PoA(\Gamma) \coloneqq \frac{\max\limits_{S \in \Nash} C(S)}{\min\limits_{S \in \mathcal{S}} C(S)} =\frac{\max\limits_{S \in \Nash} C(S)}{C(S^*)}, \; \PoS(\Gamma) \coloneqq \frac{\min\limits_{S \in \Nash} C(S)}{\min\limits_{S \in \mathcal{S}} C(S)} =\frac{\min\limits_{S \in \Nash} C(S)}{C(S^*)}
\end{align*}
and for a set $\mathcal{H}$ of graphs as
\begin{align*}
    \PoA(\mathcal{H}) \coloneqq \sup\limits_{\Gamma: G(\Gamma) \in \mathcal{H}} \PoA(\Gamma), \quad \PoS(\mathcal{H}) \coloneqq \sup\limits_{\Gamma: G(\Gamma) \in \mathcal{H}} \PoS(\Gamma).
\end{align*}
By definition we have $\PoS(\mathcal{H})\leq \PoA(\mathcal{H})$.
If the underlying game $\Gamma$ is not immediately apparent from context, it will be indicated through a superscript to all definitions.
\paragraph{Understanding an Optimal State $S^*$:}
\label{sec:understandingopt}
The first step in bounding the $\PoA$ and the $\PoS$ is understanding an optimal state. The optimal state for a packet routing problem sends all packets as fast as possible to the destination $d$. A very related problem is the discrete maximum dynamic flow problem, where as many packets as possible are being sent to $d$ in a given time horizon. Ford and Fulkerson introduced this problem in~\cite{ford1958constructing} and showed how to solve it by a single static flow computation. This static flow is then repeated over time to obtain a flow maximizing the number of packets arriving at $d$ within a given time horizon $T$. We will revisit these classical results to provide the following structure of an optimal state. 

\begin{restatable}{proposition}{satzOptStructure}\label{thm:opt}
For a packet routing game $\Gamma$ on $G\in\mathcal{G}$ let $P^1$ be a shortest $s-d$ path in $G$, and let $P^j$, $j>1$ be a shortest $s-d$ path in $G$ after the deletion of $P^1$, \dots, $P^{j-1}$. Let $k$ be such that $\tau(P^{k}) \coloneqq \sum_{e\in P^k} \tau(e)\leq C(S^*)$ and $\tau(P^{k+1})> C(S^*)$ if $P^{k+1}$ exists. There is an optimal state $S^*\in\mathcal{S}$ with the following properties.
    \begin{enumerate}
        \item There are no queues in any layer $\ell\geq 2$.
        \item $C(S^*)+1-\tau(P^1)$ packets use $P^1$ and $C(S^*)+\delta_j -\tau(P^j)$ packets use $P^j$ with $\delta_j\in\{0,1\}$ for $2\leq j \leq k$ and $\sum_{j=2}^k \delta_j = n-k\cdot C(S^*) -1 +\sum_{j=1}^k \tau(P^j)$.
    \end{enumerate}
\end{restatable}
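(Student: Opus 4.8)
The plan is to reduce the optimal-state problem to the classical discrete maximum dynamic flow problem and then exploit the special structure of linear multigraphs. First I would observe that a state $S$ delivers all $n$ packets to $d$ by time $T$ if and only if the dynamic flow associated with the network loading routes $n$ units from $s$ to $d$ within the horizon $\{0,\dots,T\}$; hence $C(S^*)$ equals the smallest $T$ for which the maximum dynamic flow over this horizon is at least $n$. By the theorem of Ford and Fulkerson~\cite{ford1958constructing}, this maximum is attained by a temporally repeated flow built from a static min-cost flow, where cost is the transit time $\tau$. A path $P$ carrying one unit of this static flow then contributes $\max\{0,\,T+1-\tau(P)\}$ packets, since copies may be launched at the integer times $0,1,\dots,T-\tau(P)$.

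The second step specializes this to $G\in\mathcal{G}$. Since every simple $s$-$d$ path selects exactly one edge per layer, the greedy decomposition $P^1,P^2,\dots$ picks, in each layer $\ell$, the $j$-th cheapest edge for $P^j$; consequently $\tau(P^1)\le\tau(P^2)\le\cdots$ and $P^1,\dots,P^k$ together constitute the min-cost static flow of value $k$. Therefore the number of packets that can reach $d$ by time $T$ equals $\sum_{j:\tau(P^j)\le T}\bigl(T+1-\tau(P^j)\bigr)$, and $C(S^*)$ is the least $T$ making this sum at least $n$. Evaluating at $T=C(S^*)$ over the paths $P^1,\dots,P^k$ (exactly those with $\tau(P^j)\le C(S^*)$) yields $k\,(C(S^*)+1)-\sum_{j=1}^{k}\tau(P^j)\ge n$ available slots. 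Filling $P^1$ completely gives the claimed $C(S^*)+1-\tau(P^1)$ packets, and assigning each remaining path either its full load or one packet fewer (encoded by $\delta_j\in\{0,1\}$) lets me match $n$ exactly; summing the per-path counts and solving for $\sum_{j=2}^{k}\delta_j$ reproduces the stated identity, and the minimality of $C(S^*)$ is precisely what forces $\sum_{j=2}^{k}\delta_j$ into the admissible range $\{0,\dots,k-1\}$.

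Finally I would turn this flow into an explicit state $S^*$ and verify Property~1. Assigning the computed number of players to each $P^j$, and noting that all players start at $s$ at time $0$, the players routed over $P^j$ queue only at its first edge $e^1_j$ and, by FIFO and unit outflow rate, leave it at the consecutive times $0,1,2,\dots$. Hence they reach $v_1$ at consecutive integer times spaced exactly one apart, which matches the unit outflow rate of the following edge $e^2_j$, so no queue forms there; inductively, this unit spacing propagates through every subsequent layer, establishing that no queue arises in any layer $\ell\ge 2$. Edge-disjointness of $P^1,\dots,P^k$ guarantees the packet streams never interfere across paths. A direct check shows that the last packet on $P^j$ reaches $d$ at time at most $C(S^*)$, so the constructed $S^*$ is optimal and exhibits the asserted structure.

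I expect the main obstacle to be the faithful translation between the discrete packet-routing dynamics and the Ford--Fulkerson dynamic-flow framework, in particular getting the time-horizon indexing and the additive ``$+1$'' terms exactly right, together with establishing the optimality (upper-bound) direction rigorously through the max-flow--min-cut structure over time. By contrast, once the counts are fixed, the construction and the no-queue induction underlying Property~1 are comparatively mechanical.
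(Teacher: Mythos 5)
Your proposal is correct and follows essentially the same route as the paper: both reduce to Ford--Fulkerson's maximum dynamic flow via successive shortest-path deletion (which in a linear multigraph never needs residual backward edges), realize it as a temporally repeated flow whose path $P^j$ carries $C(S^*)+1-\tau(P^j)$ packets, trim the surplus on the longest paths to hit $n$ exactly (your $\delta_j$ bookkeeping), and observe that loading all packets into the first-layer queues at time $0$ yields unit-spaced departures so that no queue forms in layers $\ell\geq 2$.
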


\section{Upper Bound on the Price of Anarchy}\label{sec:poa}
In this section, we will show that the $\PoA$ for the class of linear multigraphs is bounded above by 2.
We first focus on subgraphs consisting of only two consecutive nodes. This approach allows us to derive certain structural properties that will be instrumental in our proof. Furthermore, we extend the model by allowing a distinct at $s$ that differs from the zero vector, i.e., not all players may be present at $s$ at the beginning. Instead, they arrive according to the schedule dictated by the starting pattern. Formally, players will enter the queue of the first edge of their path at their respective starting times.

In linear multigraphs, the sequence of nodes from $s$ to $d$ is uniquely defined. The strategy of a player in an equilibrium always consists of edges with the shortest latency to the next nodes in the sequence, given the strategy of all players of lower index. The only difference between different equilibria is the choice of the edge to use when multiple edges have the same latency for that player. A policy $\phi(\Gamma,a_s)$ is a mapping that maps a game $\Gamma$ and a starting pattern $a_s$ of the players set at node $s$ to a state of the game that is an equilibrium. 

Our attention will be centered on a specific policy, referred to as the \emph{greedy queue policy} $\hat{\phi}(\Gamma,a_s)$. Under this policy, the players sequentially choose their edges and the current player always opts for an edge with smallest latency. In case of a tie, each player decides for one of these edges with the longest queue. If there are still multiple edges with these properties, we break ties in favor of the edge with the lowest index among these options. We denote the unique strategy vector arising from the greedy queue policy by $\badNash=\hat{\phi}(\Gamma,a_s)$ for any given packet routing game $\Gamma$ and starting pattern $a_s$. 

In the subsequent discussion, we aim to demonstrate that $\badNash$ simultaneously maximizes the arrival times of all players at $d$ and we will bound the difference to an optimal state. To achieve this goal, we start by showing that the greedy queue policy $\badNash=\hat{\phi}(\Gamma,0^n)$ produces the worst completion time in any linear multigraph setting, i.e., $C(\badNash) = \sup_{S \in \Nash} C(S)$ for every $\Gamma$ (Proposition \ref{prop_together}). Afterwards, we will observe that removing edges from a network only enlarges the completion time, when players choose $s-d$ paths according to the greedy queue policy (Lemma \ref{lemma_adding edges}).
This will allow us to bound the $\PoA$ by first fixing an optimal state and, second, deleting all edges from $G$ that are not used in the optimum. The equilibrium in the resulting graph will utilize, at most, the edges that are employed in the optimal state, thereby enabling us to establish an upper bound for the arrival times (Theorem~\ref{satz_PoA2}).

We start by providing three fundamental properties of $\badNash$.
The first one states that the workload on parallel used edges differs by at most one. This can be used to show the second property that the greedy queue policy maximizes the individual arrival times in a single-layer network. The third lemma guarantees that a pointwise smaller starting pattern will never worsen the completion time of any player with the greedy queue policy.

\begin{restatable}{lemma}{lemmaAli}\label{lemma_ali}
Let $\Gamma$ be a packet routing game where $G(\Gamma)=(V=\{s,d\},E)$ represents a linear multigraph with a single layer and let $a_s\in \N_{\geq 0}^n$ be a starting pattern at $s$. Furthermore, we require the players to adhere to the strategy profile \mbox{$\badNash=\phi(\Gamma,a_s)$}. 
If two players leave the queues on edges $e^1_i$ and $e^1_j$, $i<j$, at the same time $t$, then for the workload it holds that $l_{e^1_j}(\badNash,t) \leq l_{e^1_i}(\badNash,t) \leq l_{e^1_j}(\badNash,t)+1$.
\end{restatable}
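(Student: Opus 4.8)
The plan is to reinterpret both the workload and the greedy choice in terms of the current queue length, and then to prove a balancing invariant by induction over time. First I would record the basic single-layer identity: if a fictional player of largest index joins the back of $q_{e}$ at time $t$, she leaves after a number of steps equal to the number of players currently in the queue, so $l_{e}(\badNash,t)=\tau(e)+q_e(t)$, where $q_e(t)$ is the number of players present in $q_e$ at time $t$ immediately before the removal of that step. In these terms, ``a player leaves the queue of $e$ at time $t$'' is exactly the statement $q_e(t)\ge 1$, i.e.\ that $e$ is \emph{busy} at time $t$. The greedy queue policy then has a clean description: a player arriving at time $t$ sees latency $\tau(e)+q_e(t)=l_{e}(\badNash,t)$ on edge $e$, she selects an edge minimizing this value, among the minimizers prefers the one with the longest queue (largest $q_e(t)$), and finally the smallest index. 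Since the edges are ordered by transit time, among edges of \emph{equal} latency a longer queue means a smaller transit time; hence the tie-break always favours the lower-index (shorter) edge, which is precisely what will pin down the direction of the inequality.

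The core step is the invariant: for every time $t$ and every pair $i<j$ of edges that are both busy at $t$,
\[
0\le l_{e^1_i}(\badNash,t)-l_{e^1_j}(\badNash,t)\le 1 .
\]
I would prove this by induction over the discrete timeline, processing the finitely many arrival times in increasing order and, within each time step, the sequential greedy assignments of the generation arriving then (the assignment order coincides with the arrival order because $a_s$ is ordered). Between consecutive events the workloads of two busy edges each drop by exactly one, so their difference is preserved. At an assignment the player is placed on a current minimizer of $l_e$, which raises that edge's future workload by one and leaves the others unchanged; this is a ``fill the lowest bucket'' dynamic that keeps competing workloads within one of each other. The upper bound $l_{e^1_i}-l_{e^1_j}\le 1$ holds because greedy never adds to an edge that is already strictly higher than an available alternative, while the lower bound $l_{e^1_i}\ge l_{e^1_j}$ (the correct direction for $i<j$) is forced by the longest-queue-then-lowest-index tie-break, which always loads the extra packet onto the lower-index edge whenever two edges are tied. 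The lemma then follows at once: two simultaneous departures at $t$ on $e^1_i$ and $e^1_j$ mean both edges are busy at $t$, so the invariant applies.

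The step I expect to be the main obstacle is the bookkeeping around edges that become idle and then busy again. The workload at $t$ is governed only by the busy period of $e$ containing $t$, so in principle an edge could re-enter service with a small workload while another edge is deep inside a long busy period, seemingly breaking the balance. I would resolve this by observing that greedy never starts a packet on an edge whose latency strictly exceeds that of an already-busy competitor, so an edge that is simultaneously busy with $e^1_j$ at time $t$ cannot have fallen behind by more than the single permitted unit. The delicate part is that a whole generation may arrive at one instant and be assigned sequentially, and that the transit-time order and the queue-length order interact inside the tie-break; this is exactly where the off-by-one analysis must be carried out carefully to guarantee both the magnitude bound and the orientation of the inequality.
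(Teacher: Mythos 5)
Your proposal is correct in substance and rests on exactly the two facts the paper uses -- that the greedy choice places each packet on a workload minimizer, with the longest-queue tie-break always resolving in favour of the lower-indexed (shorter) edge, and that all busy queues drain at unit rate so workload comparisons are transported over time -- but it is packaged differently. The paper does not run a forward induction: it fixes the time $t$ of the two simultaneous departures and looks \emph{backward} at the last player to have joined each of the two queues. If $l_{e^1_j}(\badNash,t)>l_{e^1_i}(\badNash,t)$, the last player to queue on $e^1_j$ would have weakly preferred $e^1_i$ (greedy plus tie-break), and if $l_{e^1_i}(\badNash,t)>l_{e^1_j}(\badNash,t)+1$, the last player to queue on $e^1_i$ would strictly improve by switching to $e^1_j$; the uniform drain rate transports both comparisons from the decision time to $t$. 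This sidesteps entirely the idle-then-busy bookkeeping that you correctly flag as the delicate point of your approach. If you do want to complete the forward induction, note that your invariant as stated -- quantified over pairs of \emph{simultaneously} busy edges -- is not inductive across idle periods: when $e^1_i$ re-enters service at time $t$, the upper bound $l_{e^1_i}\le l_{e^1_j}+1$ is immediate from the greedy choice, but re-establishing the lower bound $l_{e^1_j}\le l_{e^1_i}$ requires knowing that $e^1_j$'s queue did not stay long while $e^1_i$ idled, and that information is not carried by the both-busy invariant. The fix is to strengthen the induction hypothesis to one-sided statements that require only one edge to be busy, namely that $l_{e^1_j}(\badNash,t)\le l_{e^1_{i}}(\badNash,t)$ whenever $e^1_j$ is busy and $i<j$, and that $l_{e^1_i}(\badNash,t)\le l_{e}(\badNash,t)+1$ for every other edge $e$ whenever $e^1_i$ is busy; each of these is preserved between joins (the busy side drops by exactly one per step, the other side by at most one) and re-established at each join by the greedy rule, and together they imply the lemma. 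With that adjustment your argument goes through; without it, the induction step at re-entry implicitly reduces to the paper's backward look at the last player to join the competitor edge.
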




\begin{restatable}{lemma}{lemmaOneLayerNew}\label{lemma_onelayernew}
Let $\Gamma$ be a packet routing game with $G(\Gamma)=(V=\{s,d\},E)$ being a linear multigraph with only one layer and $a_s\in \N_{\geq 0}^n$ be a starting pattern at $s$. If players adhere to the strategy profile $\badNash=\phi(\Gamma, a_s)$, then $C_i(\badNash)\geq C_i(S)$ for all $S \in \Nash$ and all $i\in N$.
\end{restatable}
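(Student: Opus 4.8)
The plan is to reduce the statement to a single per-player inequality and then to prove that inequality by induction on the players in index order, maintaining a ``domination'' invariant between the greedy configuration and the configuration of an arbitrary equilibrium.

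First I would record the best-response characterization of arrival times in a single layer. Because the starting pattern is nondecreasing and ties in every queue are broken in favor of the lower index, a player $i$ is never delayed by a player of higher index: on the unique queue it enters (at time $a_s^i$) every player $j>i$ sits strictly behind it in FIFO order, exactly as in the proof of \Cref{lemma_order}. Hence the latency player $i$ would incur on an edge $e$ depends only on the players of smaller index that also chose $e$, and since in a UFR equilibrium $i$ occupies a latency-minimizing edge, its completion time equals the best arrival it could attain against the \emph{fixed} placement of players $1,\dots,i-1$. Writing $g(C,t):=\min_{e}\max\bigl(t+\tau(e),\,\mathrm{lastdep}_e(C)+1+\tau(e)\bigr)$ for the earliest arrival at $d$ of a fresh highest-index entrant arriving at time $t$ under a partial configuration $C$ (where $\mathrm{lastdep}_e$ is the last departure time currently scheduled on $e$), this reads $C_i(S)=g(S^{[i-1]},a_s^i)$ for every $S\in\Nash$ and likewise for $\badNash$, where $S^{[i-1]}$ denotes the induced placement of the first $i-1$ players. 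The lemma is therefore equivalent to $g(\badNash^{[i-1]},a_s^i)\ge g(S^{[i-1]},a_s^i)$ for all $i$.

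Second, I would run the induction on $k$ with the invariant $g(\badNash^{[k]},t)\ge g(S^{[k]},t)$ for all $t\ge a_s^{k+1}$. The base case $k=0$ is trivial (both configurations are empty), and evaluating the invariant at its boundary $t=a_s^{k+1}$ is precisely the desired $C_{k+1}(\badNash)\ge C_{k+1}(S)$; so the lemma follows the moment the invariant is shown to propagate. Adding player $k+1$ (which enters at the same time $a=a_s^{k+1}$ in both runs) raises the frontier $\mathrm{lastdep}_e+1+\tau(e)$ of exactly one edge: in $\badNash$ the chosen edge $e^{*}$ is a latency-minimizer carrying the longest queue, and its frontier jumps to $C_{k+1}(\badNash)+1$, whereas in $S$ some latency-minimizer $e'$ jumps to $C_{k+1}(S)+1$. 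Because a single update only increases $g$, and by induction $C_{k+1}(\badNash)\ge C_{k+1}(S)$, the greedy update pushes its block \emph{higher} and over a \emph{smaller} range of future entry times, which is exactly the direction needed to keep $g(\badNash^{[k+1]},\cdot)\ge g(S^{[k+1]},\cdot)$.

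The main obstacle is the propagation step in the tie case, i.e.\ when $C_{k+1}(\badNash)=C_{k+1}(S)$ so that greedy and $S$ block genuinely different edges. Here the min-over-edges structure of $g$ is not controlled by the invariant alone, since $S$'s update may lift its own minimum above what the bare inequality guarantees, and this is precisely where the longest-queue rule together with \Cref{lemma_ali} must enter. Among the latency-minimizing edges the longest queue is the \emph{fastest congested} edge, and a rearrangement argument shows that placing the new block on a faster edge dominates placing it on a slower one for $g$ (pairing larger frontiers with smaller transit times maximizes $\min_e\max(t+\tau(e),F_e)$ pointwise in $t$). I expect the clean formulation requires strengthening the invariant from a comparison of the two \emph{minima} to a comparison of the full per-edge frontier profiles on the relevant future domain $t\ge a_s^{k+1}$, and then using the balance guaranteed by \Cref{lemma_ali} (workloads of simultaneously used edges differ by at most one) to pin down the profile of $\badNash$ as the pointwise-worst such arrangement. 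Verifying that this strengthened profile domination survives one greedy versus arbitrary step — in particular ruling out the subtle case where $S$'s block was the unique witness for some deadline — is the crux of the argument; the reductions before it, by contrast, are routine.
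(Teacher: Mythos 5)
Your reduction to the per-player quantity $g$ and the equilibrium characterization $C_i(S)=g(S^{[i-1]},a_s^i)$ are fine, but the proof has a genuine gap exactly where you flag it: the propagation of the invariant in the tie case $C_{k+1}(\badNash)=C_{k+1}(S)$ is never carried out, and it is the entire content of the lemma. The natural strengthening you propose --- pointwise domination of the per-edge frontier profiles --- is simply false as stated: when greedy places the new player on the fastest tied edge (longest queue) and $S$ places it on a slower tied edge, greedy's frontier is higher on the fast edge but $S$'s is higher on the slow edge, so neither profile dominates the other edge by edge. You would need a rearranged (sorted, or matching-based) domination order, a proof that it is preserved under one greedy step versus one arbitrary equilibrium step, \emph{and} a proof that it implies the inequality on $g$; none of these is supplied, and the ``unique witness'' case you mention is precisely where such arguments tend to break. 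As written, the argument proves the lemma only in the regime where $C_{k+1}(\badNash)>C_{k+1}(S)$ or both policies pick the same edge.

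The paper avoids profile tracking altogether by working with a single scalar potential: the total queue length $Q(S,t)$ after the removal step at time $t$. It first shows $Q(\badNash,t)\ge Q(S,t)$ for all $t$ and all $S\in\Nash$, by taking the earliest violation time and deriving a contradiction from Lemma~\ref{lemma_ali}: greedy would then use strictly more \emph{new} (empty-queue) edges than $S$, forcing some edge $e''$ to carry a longer queue under $S$ than under $\badNash$, whence its last queued player in $S$ could improve by at least one time unit by switching to the new edge $e'$ --- contradicting that $S$ is an equilibrium. The second, and decisive, observation is that in any equilibrium the queued players ``level out'' the latencies on the shortest edges, so the completion time of a player starting at $t+1$ is a monotone function of the \emph{single number} $Q(S,t)$, not of the full edge-by-edge configuration. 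That observation is exactly what dissolves the tie-case difficulty your approach runs into; without it (or a completed rearrangement argument), your induction does not close.
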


\begin{restatable}{lemma}{lemmaMnKStartpattern}\label{lemma_mnK_startpattern}
Let $\Gamma$ be a packet routing game with $G(\Gamma)=(V=\{s,d\},E)$ and $a, b\in \N_{\geq 0}^n$ be starting patterns at $s$ with $a\leq b$ pointwise. Let $\badNash^a = \phi(\Gamma, a)$ and $\badNash^b = \phi(\Gamma, b)$ denote the equilibria arising from the policy greedy queue for $a$ and $b$, respectively. For every $i \in N$ denote by $C_i(\badNash^a)$ and $C_i(\badNash^b)$ the corresponding completion times of player $i$. Then, $C_i(\badNash^a)\leq C_i(\badNash^b)$ for all $i\in N$.
\end{restatable}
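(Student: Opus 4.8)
The plan is to recast the per-player statement as a statement about cumulative arrivals at $d$ and then exploit the fact that a single-layer equilibrium is an \emph{earliest arrival} pattern. First I would extend Lemma~\ref{lemma_order} to arbitrary (sorted) starting patterns, so that in both $\badNash^a$ and $\badNash^b$ the players reach $d$ in the order of their indices; consequently $C_i(\badNash^a)$ is precisely the $i$-th smallest completion time, and likewise for $b$. For a starting pattern $x$ write $N_x(T)\coloneqq |\{i\in N: C_i(\badNash^x)\le T\}|$ for the number of players that have reached $d$ by time $T$. Since both completion vectors are sorted, the desired inequalities $C_i(\badNash^a)\le C_i(\badNash^b)$ for all $i$ are \emph{equivalent} to $N_a(T)\ge N_b(T)$ for all $T$ (for one direction take $T=C_i(\badNash^b)$; for the other compare the level sets $\{i:C_i(\badNash^x)\le T\}$). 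So it suffices to show that delaying the starting pattern never lets more players arrive by any deadline.

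For this I would combine two ingredients. The first is \emph{single-edge monotonicity}: on a single edge the FIFO departure times satisfy $D_k=\max(t_k,D_{k-1}+1)$ for the $k$-th entering player, which is monotone in the sorted vector of entry times $t_k$; hence feeding pointwise-earlier entry times produces a pointwise-earlier sorted vector of completion times $D_k+\tau(e)$ on that edge. Applying this edge by edge, if I keep the edge assignment of $\badNash^b$ fixed but replace the starting times by the pointwise smaller pattern $a$, every player reaches $d$ no later than in $\badNash^b$; writing $S'$ for this (not necessarily equilibrium) $a$-state, the completion multiset of $S'$ is dominated by that of $\badNash^b$, so $N_{S'}(T)\ge N_b(T)$ for all $T$. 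The second ingredient is that, in a single layer, the greedy-queue equilibrium realizes an earliest arrival pattern: it minimizes the sorted vector of completion times over \emph{all} states, equivalently maximizes $N(T)$ for every $T$. Granting this, $N_a(T)\ge N_{S'}(T)\ge N_b(T)$, which is exactly the claim.

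The main obstacle is this earliest-arrival claim for a single layer, and it is worth stressing why the obvious coupling fails: one might hope to run the two greedy processes in lockstep and maintain that each edge's queue is at least as long under $b$ as under $a$, but this per-edge domination of queue lengths (and even the domination of the sorted vector of queue lengths) genuinely breaks as soon as a delayed player switches edges, whereas the cumulative-arrival domination $N_a\ge N_b$ survives. To prove the earliest-arrival property I would lean on the Ford--Fulkerson earliest-arrival-flow structure already invoked for Proposition~\ref{thm:opt}: on a chain-of-parallel, here single-layer, network the earliest arrival flow is temporally repeated, and crucially its discrete realization has the same completion-time multiset as the greedy assignment that always routes the current player to an edge of minimal completion time. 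The point to verify carefully is that this multiset is independent of the tie-breaking rule, so that, although the greedy-queue policy yields the \emph{worst} equilibrium on multilayer graphs, on a single layer all equilibria (and in particular $\badNash^a$) share the \emph{unique} earliest-arrival completion multiset; this is essentially the equality content underlying Lemmas~\ref{lemma_ali} and~\ref{lemma_onelayernew}. With that in hand the chain of inequalities above closes the proof.
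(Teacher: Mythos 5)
Your reduction to cumulative arrival counts and your first ingredient are fine: FIFO departure times on a single edge are monotone in the (sorted) entry times, so the state $S'$ that reuses the edge assignment of $\badNash^b$ with the earlier starting times $a$ satisfies $N_{S'}(T)\ge N_b(T)$ for all $T$. The gap is the second ingredient, which carries all the weight. You claim that in a single layer the greedy-queue equilibrium is an earliest-arrival pattern, i.e., maximizes $N(T)$ for every $T$ over all states, and that all single-layer equilibria share one completion multiset ``as the equality content of Lemmas~\ref{lemma_ali} and~\ref{lemma_onelayernew}''. This is false, and in fact inverts the role of the greedy queue policy: Lemma~\ref{lemma_onelayernew} is itself a single-layer statement saying that greedy queue is the pointwise \emph{worst} equilibrium, and that inequality can be strict. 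Concretely, take two parallel edges with $\tau(e_1)=1$, $\tau(e_2)=2$ and starting pattern $(0,0,1)$. Greedy queue breaks both ties in favour of the longer queue on $e_1$ and yields completion times $(1,2,3)$, while the state in which player $2$ uses $e_2$ is also an equilibrium with completion times $(1,2,2)$. So $\badNash^a$ achieves $N_a(2)=2$ while another state (indeed another equilibrium) achieves $3$; the inequality $N_a(T)\ge N_{S'}(T)$ that closes your chain is therefore unjustified, and an argument that needs the deliberately bad tie-breaking rule to be earliest-arrival is pointed in the wrong direction. What you would actually have to show is that $\badNash^a$ dominates the \emph{specific} state $S'$ induced by $\badNash^b$'s assignment, which the Ford--Fulkerson earliest-arrival structure does not give you.

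The paper's proof takes a different route that never asserts any optimality of greedy: it reduces to starting patterns differing by one unit in a single coordinate, inserts a dummy player into the vacated time slot so that the two greedy runs coincide for all subsequent players, and then swaps the dummy forward through the generations while checking that no real player's arrival time increases. If you want to keep your multiset/$N(T)$ framework, you need a direct coupling of the two greedy processes rather than a detour through an earliest-arrival benchmark.
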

\gotoAppendix{\noindent
Although the statement of Lemma \ref{lemma_mnK_startpattern} appears very natural, the proof is rather technical. 
The main idea of the proof is to restrict on patterns, which only differ in one position by one. Then, we introduce a dummy player and carefully swap this dummy player with consecutive players to transform one pattern into the other, while keeping track of the changes in the arrival patterns.}{
\begin{proof}
It is sufficient to show that the statement holds for starting patterns $a$ and $b$ which differ in only one entry by one. Then, a general $a$ can be transformed into $b$ by step-wise transformations. 

Let $a$ and $b$ be such starting patterns and $\ell$ be the index at which the patterns differ, that is $a(\ell)+1=b(\ell)$. Note that player $\ell$ is the player with the largest index of her generation in $a$ and the player with the smallest index of her generation in $b$. Obviously, $C_{\ell}(\badNash^a)\le C_{\ell}(\badNash^b)$, since all players up to player $\ell-1$ choose the same strategy under the greedy queue policy and $\ell$ could use the same edge under both starting patterns. Note that $C_\ell(\badNash^a) = C_\ell(\badNash^b)$ if and only if $\ell$ has to wait under starting pattern $a$, i.e., $\ell$ uses an old edge with a non-empty queue.

\paragraph{Claim:} If $C_\ell(\badNash^a) = C_\ell(\badNash^b)$, we have $\badNash^a = \badNash^b$.

\noindent \textit{Proof of claim.} Both $\badNash^a$ and $\badNash^b$ arise from the greedy queue policy $\phi$, i.e., is the same for all players up to $\ell-1$ as the graph and the starting times are the same up to player $\ell-1$. Given this setup, $\phi$ allocates the edge with the smallest workload to $\ell$ where ties are broken in favor of longer queues. As observed above, $\ell$ is allocated to some edge $\badNash^a_{(\ell)}$ with queue under $\badNash^a$. The arrival time of $\ell$ on the same edge under $b$ is obviously the same, thus, this is still the fastest edge with the longest queue under $b$. Hence, $\badNash^a_{(\ell)} = \badNash^b_{(\ell)}$ with the same arrival time of $\ell$, i.e., nothing changes for all subsequent players and $\badNash^a = \badNash^b$.\qed

\medskip
\noindent This finishes the proof for the case $C_\ell(\badNash^a) = C_\ell(\badNash^b)$.
On the other hand, if player $\ell$ arrives earlier because she starts one time unit earlier at $s$, then she does not queue, i.e., she uses an unused edge. As this edge is also unused when $\ell$ starts one time step later, we observe $C_\ell(\badNash^a) = C_\ell(\badNash^b) -1$. It remains to show that $C_i(\badNash^a) \leq C_i(\badNash^b)$ for all $i > \ell$. 

We add an additional dummy player to the player set with an index between $\ell$ and $\ell+1$, where the starting time of the dummy is the same as of $\ell+1$. As we consider the greedy queue policy and $\ell$ did not queue, the dummy uses the edge of player $\ell$ under $b$. Since the subsequent players now face the exact same situation in our new game with starting pattern $a$ and the additional dummy as in $b$, $C_i(\badNash^a) = C_i(\badNash^b)$ for all $i\geq\ell +1$.

Now, we will carefully modify the priority of the dummy and swap it with the subsequent players generation by generation. While doing so, we observe that the arrival times $C_i(\badNash^a)$ do not increase. For the generation of player $\ell +1$, we sequentially swap the priorities and strategies of the dummy with its subsequent player. This swap only changes the strategies of the dummy and the subsequent player and may decrease the arrival time of the subsequent player. As this swap does not change queue lengths, all other players still behave according to the greedy queue policy. Thus, the new state is an equilibrium according to greedy queue. We can continue this procedure until the dummy player is the player with the largest index of its generation. Note that the swaps might have increased the dummy player's arrival time. 

If the dummy has the largest index in its generation, we have two cases. 

\paragraph{Case 1:} If the dummy player is in a queue, we increment its generation by one, i.e., it is now the player with the smallest index of the next generation and starts one time unit later. Now, the same argument as in the claim holds for the dummy and she chooses the same edge under the policy greedy queue. We continue and iterate the swapping until the dummy is again the player with the largest index of the generation.
\paragraph{Case 2:} If the dummy player uses an edge without queue or is the player with the largest index, we delete the dummy player. In the first subcase, deleting the dummy does not change the queues for the player after the dummy, as this player could even take the dummy's edge without observing any queue. In the second subcase, the dummy was the player with the largest index, and deleting the dummy trivially does not change $C(\badNash^a)$.\\

\noindent Finally, we have started with $C_i(\badNash^a) = C_i(\badNash^b)$ for all $i>\ell$ and we have never increased $C_i(\badNash^a)$ during the dummy swaps. Thus, we have $C_i(\badNash^a)\leq C_i(\badNash^b)$ for all $i\in N$. 
\end{proof}}

Now we are ready to show that the greedy queue policy constructs an equilibrium simultaneously maximizing the arrival times of all players under all equilibria.
\begin{proposition}\label{prop_together}
Let $\Gamma$ be a packet routing game with $G(\Gamma)\in\mathcal{G}$ (and starting pattern $0^n$). The greedy queue policy constructs an equilibrium maximizing the arrival times of all players at $d$ under all equilibria $S \in \Nash$. Specifically, for $\badNash=\phi(\Gamma,0^n)$, it holds that $C(S) \leq C(\badNash)$.
\end{proposition}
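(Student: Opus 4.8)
The plan is to prove the stronger statement that the greedy queue policy pointwise dominates every equilibrium at \emph{every} node, i.e.\ that $a_{v_j}^i(S)\le a_{v_j}^i(\badNash)$ for all players $i\in N$ and all nodes $v_j\in V$, by induction on the layer index $j$. The proposition then follows at $v_m=d$, since $C(S)=C_n(S)=a_d^n(S)\le a_d^n(\badNash)=C_n(\badNash)\le C(\badNash)$ by Lemma~\ref{lemma_order}. The base case $j=0$ is immediate because $a_s(S)=a_s(\badNash)=0^n$.

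The crucial preliminary for the inductive step is to observe that a single layer may be analyzed in isolation, with the arrival pattern at $v_{j-1}$ playing the role of a starting pattern for the one-layer subgame $\Gamma_j$ formed by the edges from $v_{j-1}$ to $v_j$. I would make precise two decomposition facts. First, the restriction of $\badNash$ to layer $j$ coincides with $\hat\phi(\Gamma_j, a_{v_{j-1}}(\badNash))$, because the greedy choice of a player in layer $j$ (fastest edge, ties broken by longest queue and then lowest index) depends only on her arrival time at $v_{j-1}$ and on the layer-$j$ queues created by lower-index players; the policy is layer-local. Second, the restriction of an arbitrary equilibrium $S$ to layer $j$ is itself a UFR equilibrium of $\Gamma_j$ with starting pattern $a_{v_{j-1}}(S)$: any deviation confined to layer $j$ keeps the path identical outside that layer, so $v_j$ is a common node of the original and the deviated path, and the UFR condition forces $a_{v_j}^i(S)$ to be minimal among all such deviations. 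Here I would invoke Lemma~\ref{lemma_order} to guarantee that the FIFO and index tie-breaking in the subgame is consistent with the full game, since arrival order and index order agree.

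With the decomposition secured, the step glues the three preparatory results together. Writing $\hat S_j\coloneqq\hat\phi(\Gamma_j, a_{v_{j-1}}(S))$ for the greedy-queue equilibrium of the subgame driven by $S$'s arrival pattern, Lemma~\ref{lemma_onelayernew} applied to $\Gamma_j$ yields $a_{v_j}^i(S)\le a_{v_j}^i(\hat S_j)$ for all $i$, because the layer-$j$ restriction of $S$ is an equilibrium of $\Gamma_j$ and greedy queue maximizes individual arrival times in a single layer. Next, the induction hypothesis gives $a_{v_{j-1}}(S)\le a_{v_{j-1}}(\badNash)$ pointwise, so Lemma~\ref{lemma_mnK_startpattern} (monotonicity of greedy queue in the starting pattern) yields $a_{v_j}^i(\hat S_j)=C_i(\hat\phi(\Gamma_j, a_{v_{j-1}}(S)))\le C_i(\hat\phi(\Gamma_j, a_{v_{j-1}}(\badNash)))=a_{v_j}^i(\badNash)$, where the last equality is exactly the first decomposition fact. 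Chaining the two inequalities gives $a_{v_j}^i(S)\le a_{v_j}^i(\badNash)$ for all $i$, closing the induction.

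I expect the main obstacle to lie in the rigorous justification of the layer decomposition rather than in the inequality chaining. One has to verify carefully that propagating an arrival pattern as a starting pattern preserves the network-loading dynamics \emph{and} the equilibrium/policy structure inside a layer, and in particular that the restriction of a global UFR equilibrium is again a UFR equilibrium of the subgame and that the global greedy equilibrium restricts to the greedy equilibrium of the subgame. The three lemmas are tailored precisely so that, once this layer-locality is established, they compose mechanically into the desired pointwise domination.
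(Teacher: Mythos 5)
Your proof is correct and rests on the same two pillars as the paper's: Lemma~\ref{lemma_onelayernew} to compare an equilibrium against greedy queue within a single layer, and Lemma~\ref{lemma_mnK_startpattern} to propagate pointwise-dominated arrival patterns forward; the paper merely packages this as a backward hybrid argument (replacing $S$ by greedy queue one layer at a time from the sink) rather than your forward induction on the arrival pattern at each $v_j$. The layer-decomposition facts you flag as the main obstacle are indeed the same ones the paper's proof implicitly relies on, so your argument is essentially the paper's proof in a slightly different order.
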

\begin{proof}
Fix an arbitrary equilibrium $S \in \Nash$.
We will show this statement by carefully transforming $S$ to $\badNash$ in $m$ steps.
In the $j$-th iteration of the transformation, the players follow the strategy of $S$ until node $v_{m-j}$ and then follow the greedy queue policy. Denote this strategy as $S^j$.
Observe that $S=S^0$ and $\badNash=S^m$.
For any fixed $j\in \{0,\ldots,m-1\}$ the strategies $S^j$ and $S^{j+1}$ are identical until node $v_{m-j-1}$ and therefore also the arrival patterns at $v_{m-j-1}$. With Lemma $\ref{lemma_onelayernew}$, we know that the arrival pattern of $S^{j+1}$ at node $v_{m-j}$ is pointwise at least as big as the arrival pattern of $S^{j}$ at this node. Using Lemma $\ref{lemma_mnK_startpattern}$ iteratively yields $C_i(S^j)\leq C_i(S^{j+1})$ for all $j\in \{0,\ldots,m-1\}$ and all $i\in N$. Iterating this argument yields $C_i(S^0) \leq C_i(S^m)$ for all $i\in N$, and in particular $C(S) = C_n(S) = C_n(S^0) \leq C_n(S^m)= C(S^m) = C(\badNash)$. 
\end{proof}

\noindent To effectively compare the social optimum with any greedy Nash Equilibrium two additional properties are required. The following technical lemma demonstrates that, under the greedy queue policy, deleting edges is never beneficial for any player.
\begin{restatable}{lemma}{lemmaAddingEdges}\label{lemma_adding edges}
Let $\Gamma_1$ and $\Gamma_2$ be packet routing games that differ only in the underlying graphs with $G(\Gamma_1)=(V,E_1)\in \mathcal{G}$ and $G(\Gamma_2)=(V,E_1\cup E_2)\in \mathcal{G}$ with $\badNash_{G_1}=\phi(\Gamma_1,0^n)$ and $\badNash_{G_2}=\phi(\Gamma_2,0^n)$. It holds that $C(\badNash_{G_1}) \geq C(\badNash_{G_2})$.
\end{restatable}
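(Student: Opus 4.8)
The plan is to prove the stronger statement that the entire arrival pattern at $d$ is pointwise weakly smaller in the richer graph, i.e.\ $a_d^i(\badNash_{G_2}) \le a_d^i(\badNash_{G_1})$ for every player $i$; the claim $C(\badNash_{G_1}) \ge C(\badNash_{G_2})$ then follows from $C = C_n$ (Lemma~\ref{lemma_order}). Since adding the edge set $E_2$ can be done one edge at a time, it suffices to treat the case $E_2 = \{e\}$ for a single edge $e$ lying in some layer $j$; a general addition follows by composing finitely many such steps.

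Fix this single added edge $e$ in layer $j$ and split the graph into three parts. Since $G_1$ and $G_2$ agree on layers $1,\dots,j-1$ and the starting pattern is $0^n$ in both, the greedy queue policy behaves identically there, so the arrival patterns at $v_{j-1}$ coincide. In layer $j$ the two games run the single-layer greedy process on edge sets $E^{(j)}$ and $E^{(j)}\cup\{e\}$ from the \emph{same} starting pattern $a_{v_{j-1}}$; the heart of the proof (below) is to show that the resulting arrival pattern at $v_j$ is pointwise weakly smaller in $G_2$. Finally, layers $j+1,\dots,m$ carry identical edge sets but now a pointwise smaller starting pattern at $v_j$ in $G_2$; applying Lemma~\ref{lemma_mnK_startpattern} layer by layer, exactly as in the proof of Proposition~\ref{prop_together}, propagates the pointwise domination all the way to $d$.

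The crux is thus the single-layer statement: adding one edge to a one-layer network, with the starting pattern held fixed, weakly decreases every player's arrival time at $d$ under greedy queue. I would prove it by coupling the two single-layer runs, processing players in index order and tracking for each old edge $f\in E^{(j)}$ its next free departure slot $g_f$ (a player entering $f$'s queue at time $t$ leaves at $\max(t,g_f)$, after which $g_f$ is incremented by one). The invariant, maintained by induction on the player index, is $g_f' \le g_f$ for every $f\in E^{(j)}$, where primes denote the run with $e$ present. Granting the invariant before player $i$, her realized finish time in the richer run is $\min_{f\in E^{(j)}\cup\{e\}}\bigl(\max(t_i,g_f')+\tau(f)\bigr) \le \min_{f\in E^{(j)}}\bigl(\max(t_i,g_f)+\tau(f)\bigr)$, which is exactly her finish time in the sparser run; this yields the per-player inequality and also handles inductive maintenance when she picks $e$ or the same old edge in both runs, since then only a dominated slot is incremented, or none.

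The main obstacle is the one remaining case: player $i$ picks an old edge $f^\dagger$ in the richer run but a different old edge $f^\ast$ in the sparser run. Here the increment of $g_{f^\dagger}'$ threatens the invariant, because the sparser run leaves $g_{f^\dagger}$ untouched and the increment carries a $+1$. I would show this case forces $\max(t_i,g_{f^\dagger}') < \max(t_i,g_{f^\dagger})$, a strict integer gap that absorbs the $+1$, and that the only alternative, equality of these quantities, is impossible. Indeed, equality chains into $\max(t_i,g_{f^\dagger}')+\tau(f^\dagger) = \max(t_i,g_{f^\ast}')+\tau(f^\ast) = \max(t_i,g_{f^\ast})+\tau(f^\ast) = \max(t_i,g_{f^\dagger})+\tau(f^\dagger)$, so $f^\ast$ and $f^\dagger$ are tied in \emph{both} runs with identical queue lengths and edge indices; the tie-breaking rule (longest queue, then lowest index) is then deterministic and must resolve the same way in both runs, contradicting that it selects $f^\ast$ in one and $f^\dagger$ in the other. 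The tie-breaking toward the longest queue is precisely what keeps the two runs synchronized and makes the per-edge domination survive the discrete $+1$ updates; pinning down this case analysis exactly, possibly leaning on the workload bound of Lemma~\ref{lemma_ali}, is the technical core.
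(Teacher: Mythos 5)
Your proposal is correct and shares the paper's overall skeleton -- reduce to a single added edge, observe that the two greedy runs agree up to the tail of that layer, establish pointwise domination of the arrival pattern at the head of that layer, and then push the domination to $d$ by iterating Lemma~\ref{lemma_mnK_startpattern} layer by layer -- but the core single-layer step is argued by a genuinely different route. The paper takes the minimal delayed player $i$, counts how many edges must deliver a player at time $a^i_{v_{j+1}}(\badNash_{G_1})$ in the richer graph, and uses the no-overtaking property (Lemma~\ref{lemma_order}) to exhibit a smaller-index player who must also be delayed, a contradiction; this is short but leaves the bookkeeping implicit. You instead couple the two runs player by player and maintain the per-edge invariant $g_f'\le g_f$ on next-free departure slots, absorbing the discrete $+1$ in the only dangerous case (different old edges chosen) by showing that equality of $\max(t_i,g_{f^\dagger}')$ and $\max(t_i,g_{f^\dagger})$ would force a tie with identical queue lengths in both runs, which the deterministic tie-breaking cannot resolve differently. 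Your argument is longer but yields a strictly stronger structural fact (slot-wise domination on every old edge, not just pointwise domination of arrival times) and is self-contained in the sense that it does not invoke Lemma~\ref{lemma_order}. One small point to tidy up: inserting the new edge can shift the transit-time-based indices $e^j_1, e^j_2,\dots$ of the old edges, so the indices of $f^\dagger$ and $f^\ast$ need not be \emph{identical} across the two graphs; what your tie-breaking contradiction actually needs, and what does hold, is that their \emph{relative} order is preserved.
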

\gotoAppendix{
}{\begin{proof}
It is sufficient to show that the property holds for $|E_2\backslash E_1|=1$. Let $e=(v_j,v_{j+1})$ be the edge where the two graphs differ. Up to node $v_j$ the arrival patterns $a_{v_j}(\badNash_{G_1})$ and $a_{v_j}(\badNash_{G_2})$ trivially coincide.
Suppose that there is a player $i$ such that $a^i_{v_{j+1}}(\badNash_{G_2}) > a^i_{v_{j+1}}(\badNash_{G_1})$ and $a^{i'}_{v_{j+1}}(\badNash_{G_2}) \leq a^{i'}_{v_{j+1}}(\badNash_{G_1})$ for all $i'<i$. In particular, there are at most $|\{e\in E_1^{j+1}: \tau(e) \leq a^i_{v_{j+1}}(\badNash_{G_1}) - a^i_{v_{j}}(\badNash_{G_1})\}| -1$ players with index less than $i$ that arrive at $v_{j+1}$ at time $a^i_{v_{j+1}}(\badNash_{G_1})$. Here $E_1^{j+1}$ denotes the edges in $E_1$ that are in the $(j+1)$-th layer of $G_1$. Since $i$ arrives at $v_{j+1}$ strictly later in $G_2$ than in $G_1$, among all edges of $\{e\in E_1^{j+1}: \tau(e) \leq a^i_{v_{j+1}}(\badNash_{G_1}) - a^i_{v_{j}}(\badNash_{G_1})\}$ players arrive at $v_{j+1}$ at time $a^i_{v_{j+1}}(\badNash_{G_1})$ in $\badNash_{G_2}$. Note, that is one more player. By Lemma \ref{lemma_order} the additional player $i'$ has $i'<i$, since no overtaking is possible at any intermediate node in any equilibrium. But then we have $a^{i'}_{v_{j+1}}(\badNash_{G_2}) > a^{i'}_{v_{j+1}}(\badNash_{G_1})$, which contradicts that player $i$ was the player with the smallest index that was delayed. Therefore, $a^i_{v_{j+1}}(\badNash_{G_1}) \geq a^i_{v_{j+1}}(\badNash_{G_2})$ holds for all $i\in N$. 
In the later layers, the graphs are indistinguishable. Therefore, by iterating node by node to the destination node $d$, we obtain the statement by applying Lemma $\ref{lemma_mnK_startpattern}$ at each step.
\end{proof}}
\noindent Finally, we present our last lemma which asserts that the transit time of any player in any equilibrium in any layer with inflow at most $k$ is bounded by the $k$-th shortest transit time in that layer.
\begin{lemma}\label{lemma_inflowk}
Let $j$ be a layer with at least $k$ edges $e_1^j, e_2^j, \dots$ in a linear multigraph. Let $S \in \Nash$ be an equilibrium for a packet routing game on this graph. If the maximum inflow into $v_{j-1}$ under $S$ is at most $k$ at any point in time, the latency of a player never exceeds the $k$-th shortest transit time $\tau(e_k^j)$ in that layer.
\end{lemma}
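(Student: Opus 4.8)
The plan is to prove the statement by induction on the number of players that have entered layer $j$, tracking the workload on each of the $k$ shortest edges in layer $j$ and arguing that no player is ever forced onto an edge with transit time exceeding $\tau(e_k^j)$. The core idea is a capacity-counting argument: since the inflow into $v_{j-1}$ is at most $k$ per time step, and there are $k$ edges available in layer $j$ each able to dispatch one player per time step, the $k$ fastest edges can collectively absorb the entire inflow without any queue growing without bound in a way that would push latency past $\tau(e_k^j)$.

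First I would set up the state variables: for each of the $k$ fastest edges $e_1^j,\dots,e_k^j$, track the queue length (equivalently, the workload $l_{e_r^j}(S,t)$) at each time $t$. The key invariant I would aim to maintain is that whenever a player leaves $v_{j-1}$, there is an available edge among $e_1^j,\dots,e_k^j$ on which that player's latency is at most $\tau(e_k^j)$. In an equilibrium a player always selects an edge of minimum latency, so it suffices to show that the minimum latency over all edges in the layer is at most $\tau(e_k^j)$ at every relevant time. I would argue this by a discharging/counting argument on how many players can simultaneously be "blocking" the fast edges: if all of $e_1^j,\dots,e_k^j$ had workload strictly exceeding $\tau(e_k^j)$ at some time $t$, this would mean each carries a queue, and counting the total number of players that must have arrived at $v_{j-1}$ by time $t$ to create these queues would exceed what the inflow bound of $k$ per step permits over the preceding interval, yielding a contradiction.

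The main technical step is to make the counting precise: I would consider the last time $t_0 \le t$ at which some fast edge $e_r^j$ had empty queue (workload exactly $\tau(e_r^j) \le \tau(e_k^j)$), and count arrivals into $v_{j-1}$ during $(t_0, t]$. During this interval the number of players that entered the $k$ fast edges is at most $k \cdot (t - t_0)$ worth of inflow, while the $k$ edges can dispatch at least (roughly) $k \cdot (t-t_0)$ players, so queues cannot all simultaneously build up beyond the $\tau(e_k^j)$ threshold. Because the inflow into $v_{j-1}$ is capped at $k$ per time step (the hypothesis), the number of players competing for these $k$ edges never outpaces their combined service rate in a way that raises the cheapest available latency above $\tau(e_k^j)$.

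The hard part will be handling the interaction between layers and the FIFO ordering carefully: a player's latency in layer $j$ depends on its exact arrival time at $v_{j-1}$, and the inflow pattern at $v_{j-1}$ need not be uniform even though its peak is bounded by $k$. I expect the subtlety to lie in ruling out the scenario where bursts of inflow (still at most $k$ per step but sustained) accumulate on the fast edges; the resolution should invoke Lemma~\ref{lemma_ali}, which guarantees that the workloads on parallel used edges differ by at most one, so the $k$ fast edges stay balanced and a player always finds one of them with latency at most $\tau(e_k^j)$. With that balance property in hand, the counting argument closes cleanly: no single fast edge can absorb a disproportionate share of a bounded inflow, and hence the minimum latency in the layer never exceeds $\tau(e_k^j)$.
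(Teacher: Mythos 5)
Your core idea --- a capacity count pitting the inflow bound of $k$ against the $k$ parallel servers, combined with the observation that an unused edge among $e_1^j,\dots,e_k^j$ offers latency at most $\tau(e_k^j)$ --- is the same idea the paper uses, but your execution has gaps that the paper's much shorter argument avoids. The paper simply takes the \emph{first} time $t$ at which some player's latency exceeds $\tau(e_k^j)$ and analyzes that single time step: at time $t$ at most $k$ players join the queues and one player is removed from every used edge, so either all of $e_1^j,\dots,e_k^j$ are used (then the queue mass on these edges does not grow and no latency increases past the bound inherited from time $t-1$) or one of them is unused (then any player of that generation could take it and, by the UFR property, arrives at least as fast). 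Your interval-counting version, which goes back to the last time $t_0$ at which some fast edge had an empty queue, is shakier: over $(t_0,t]$ the $k$ fast edges dispatch $k$ players per step only while \emph{all} of them are used; if one goes idle mid-interval, your ``at least roughly $k(t-t_0)$ dispatched'' bound fails, and you would have to restart the argument at that idle moment anyway --- which is exactly the paper's per-time-step dichotomy.

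Two further points need repair. First, Lemma~\ref{lemma_ali} is stated only for the greedy queue profile $\badNash$ on a single-layer graph; for an arbitrary $S\in\Nash$ only one half of it survives (a used edge cannot have workload exceeding another used edge's workload by more than one, which follows from the equilibrium property alone), so you cannot cite it wholesale to keep the fast edges ``balanced,'' although the half you actually need does hold. Second, the place where the bound is tight is the within-generation sequencing: up to $k$ players arrive at $v_{j-1}$ simultaneously and join queues one after another, each incrementing a workload by one, so the $p$-th of them may already see $p-1$ of the fast edges pushed above $\tau(e_k^j)$. The observation that closes this --- at most $p-1$ increments have occurred, hence at least $k-p+1\ge 1$ of the edges $e_1^j,\dots,e_k^j$ still has workload at most $\tau(e_k^j)$ --- is missing from your sketch and cannot be replaced by the coarse ``inflow at most $k$ per step versus service $k$ per step'' balance, since that balance is an equality in the tight case.
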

\begin{proof}
Consider the first time $t$ in an equilibrium where a player $i$ uses an edge with a latency exceeding $\tau(e_k^j)$. Because $t$ was the earliest time with this property, we know that at time $t-1$ no player experiences a latency that exceeds $\tau(e_k^j)$. As at most $k$ players enter the queues at time $t$ and we remove one player from the queue of each used edge afterwards, either there are $k$ used edges and the latencies do not increase, or there are at most $k-1$ used edges, i.e., one of the edges $e_1^j, e_2^j, \dots, e_k^j$ is unused. In the latter case, the latency is trivially bounded by $\tau(e_k^j)$. Thus, there is still an edge with latency $\tau(e_k^j)$, which finishes the proof.
\end{proof}
\noindent With Proposition~\ref{prop_together} and all the lemmata at hand, we are now able to prove an upper bound of 2 on the $\PoA$ for the class of linear multigraphs.

\begin{theorem}\label{satz_PoA2} For the family $\mathcal{G}$ of linear multigraphs, we have
$\PoA(\mathcal{G})\leq 2$.
\end{theorem}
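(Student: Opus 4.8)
The plan is to combine the structural results already in hand into two independent estimates---one for the first layer and one for all later layers---each bounded by the optimal makespan $T^* := C(S^*)$, so that the completion time in the worst equilibrium is at most $2T^*$. First I would invoke Proposition~\ref{prop_together} to reduce to the single worst equilibrium $\badNash_G=\hat\phi(\Gamma,0^n)$, so that it suffices to prove $C(\badNash_G)\le 2T^*$. Then I fix an optimal state $S^*$ of the form guaranteed by Proposition~\ref{thm:opt} and let $G'\subseteq G$ be the subgraph obtained by deleting every edge that does not lie on one of the paths $P^1,\dots,P^k$. By Lemma~\ref{lemma_adding edges} deleting edges never decreases the completion time, i.e.\ $C(\badNash_G)\le C(\badNash_{G'})$, so it is enough to bound the equilibrium on $G'$. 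The advantage of $G'$ is that it has exactly $k$ edges in every layer: since $G'$ is a linear multigraph, each of the $k$ edge-disjoint $s$--$d$ paths uses precisely one edge per layer.

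The structural observation I would record next is that the greedy shortest edge-disjoint paths of Proposition~\ref{thm:opt} are \emph{sorted layer by layer}. Because a path's length is a sum of one independently chosen edge per layer, the shortest path $P^1$ must take the cheapest edge of each layer, $P^2$ the second cheapest, and so on; hence $P^j$ uses the $j$-th cheapest edge of every layer. Consequently the largest-transit edge of layer $j$ inside $G'$ is exactly $e_k^j$, and summing over layers gives $\sum_{j=1}^m\tau(e_k^j)=\tau(P^k)\le T^*$, the last inequality being the defining property of $k$. Writing the completion time of the last player (which realizes $C(\badNash_{G'})$ by Lemma~\ref{lemma_order}) as $C_n=a_{v_1}^n+\sum_{j=2}^m l_{e_{(j)}}^n$, I note that for every layer $j\ge 2$ at most $k$ packets leave the preceding layer per time step, so the inflow into $v_{j-1}$ is at most $k$; Lemma~\ref{lemma_inflowk} then gives $l_{e_{(j)}}^n\le\tau(e_k^j)$, and therefore $\sum_{j=2}^m l_{e_{(j)}}^n\le\sum_{j=1}^m\tau(e_k^j)=\tau(P^k)\le T^*$.

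It remains to show $a_{v_1}^n\le T^*$, and this is the step I expect to be the main obstacle, precisely because Lemma~\ref{lemma_inflowk} is \emph{not} available for layer $1$: all $n$ players sit at $s$ at time $0$, so the inflow there is $n$ rather than at most $k$. I would resolve this by observing that the greedy queue policy, restricted to layer~$1$, is exactly the rule that assigns each successive packet to the edge giving it the earliest arrival, i.e.\ it fills the $n$ globally earliest arrival slots across the $k$ first-layer edges; such a schedule is makespan-optimal among all single-layer routings on these edges (using Lemma~\ref{lemma_ali}, which keeps parallel workloads within one, and Lemma~\ref{lemma_onelayernew} to make the comparison with other equilibria precise). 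Since the restriction of $S^*$ to layer~$1$ is one feasible single-layer routing whose last arrival is at most $T^*$---a packet routed on $P^j$ enters its first edge by time $T^*-\tau(P^j)+\delta_j-1$ and, incurring only the transit $\tau(e_j^1)\le\tau(P^j)$, reaches $v_1$ by $T^*$---optimality yields $a_{v_1}^n\le T^*$. Combining the two estimates gives $C_n\le a_{v_1}^n+\tau(P^k)\le 2T^*$, and hence $\PoA(\mathcal{G})\le 2$. The only point requiring care beyond the cited lemmata is this single-layer optimality of the greedy fill-earliest rule; everything else is a direct assembly of Propositions~\ref{thm:opt} and~\ref{prop_together} with Lemmata~\ref{lemma_order},~\ref{lemma_adding edges}, and~\ref{lemma_inflowk}.
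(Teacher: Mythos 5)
Your proposal is correct and follows essentially the same route as the paper's proof: reduce to the greedy-queue equilibrium via Proposition~\ref{prop_together}, pass to the subgraph $G'$ of optimally used edges via Lemma~\ref{lemma_adding edges}, bound the latency beyond $v_1$ by $\sum_j \tau(e_k^j)=\tau(P^k)\le C(S^*)$ using Lemma~\ref{lemma_inflowk}, and bound $a_{v_1}^n$ by $C(S^*)$ via the observation that any UFR equilibrium fills the earliest arrival slots at $v_1$. The step you flag as the main obstacle is resolved exactly as in the paper, which argues directly that for each time $t$ the equilibrium maximizes the number of players having reached $v_1$.
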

\begin{proof}
For a fixed packet routing game $\Gamma$, there is a $k$ such that there exists an optimal flow $S^*$ that is essentially temporally repeated and uses $k$ disjoint paths due to Theorem $\ref{thm:opt}$. Consider the subgraph \mbox{$G'=(V,E')$} of $G=(V,E)$, which consists only of all edges used by this optimal flow. Consider the induced packet routing game $\Gamma'$ from $\Gamma$ that differs only on the underlying graph, i.e., \mbox{$G(\Gamma')=G'$} and $G(\Gamma)=G$. Clearly it holds that $C^{\Gamma}(S^*)= C^{\Gamma'}(S^*)$. \mbox{Proposition \ref{prop_together}} and Lemma \ref{lemma_adding edges} show that for the worst equilibria $\badNash_{G}=\phi(\Gamma,0^n)$ and $\badNash_{G'}=\phi(\Gamma',0^n)$ in the two graphs we have $C^{\Gamma}(\badNash_{G}) \leq C^{\Gamma'}(\badNash_{G'})$ and, hence, $\PoA(\Gamma)\leq \PoA(\Gamma')$. Thus, it suffices to show the claim for the subgraph $G'$.

If $G'$ has only one layer, the definitions of the equilbria and the social optimum coincide as there are no intermediate nodes. Thus, $S^*$ has the same arrival pattern at $d$ as the equilibrium $\badNash_{G'}$ and hence $\PoA(\Gamma')=1$. 

If $G'$ consists of more than one layer, we will separately discuss the latency of the last player from $s$ to $v_1$ and from $v_1$ to $d$, respectively. In any UFR equilibrium $S$, every player aims to be as fast as possible at node $v_1$. Thus, for each time $t$, a player arrives at $v_1$ via all edges $e^1$ with $\tau(e^1) \leq t$. It is immediate that for each $t$, the equilibrium $S$ maximizes the number of players arriving at $v_1$ by $t$. Additionally, the players arrive at $v_1$ in the order of their indices due to Lemma~\ref{lemma_order}. Thus, the $n$-th player reaches $v_1$ in $S$ not later than the last player reaches $v_1$ in $S^*$, and thus we have in particular $a^n_{v_1}(S)\leq \max_{i\in N}a^i_{v_1}(S^*) \leq C^{\Gamma'}(S^*)$. 
For the second part from $v_1$ to $d$, note that the inflow in each layer except the first is at most $k$, since $G'$ is constructed by $k$ edge-disjoint $s-d$ paths.
Therefore, by Lemma \ref{lemma_inflowk}, a player in an equilibrium has a maximum latency of $\tau(e_k^j)$ in any layer $j>1$. We conclude that for each player in an equilibrium, we can bound the total latency from $v_1$ to $d$ from above by $\sum_{j=2}^m \tau(e_k^j)$. 
As seen in Theorem $\ref{thm:opt}$, we have $\sum_{j=1}^m \tau(e_k^j)\leq C^{\Gamma'}(S^*)$. Therefore, we can further bound the total latency from $v_1$ to $d$ from above by $\sum\limits_{j=2}^m \tau(e_k^j)\leq C^{\Gamma'}(S^*)-\tau(e_k^1)$.\\
Adding both parts, we obtain
\begin{align*}
    C^{\Gamma'}(\badNash_{G'})&\leq a^n_{v_1}(S) + \left(C^{\Gamma'}(S^*) -\tau(e_k^1)\right) \leq 2 \cdot C^{\Gamma'}(S^*).
\end{align*}
Thus, for all $\Gamma$ with $G(\Gamma) \in \mathcal{G}$ it follows \[\PoA(\Gamma)\leq \PoA(\Gamma')=\frac{C^{\Gamma'}(\badNash_{G'})}{C^{\Gamma'}(S^*)} \leq \frac{2 \cdot C^{\Gamma'}(S^*)}{C^{\Gamma'}(S^*)} \leq 2\;.\]
This yields $\PoA(\mathcal{G}) = \sup\limits_{\Gamma: G(\Gamma) \in \mathcal{G}} \PoA(\Gamma) \leq 2\;.$
\end{proof}
\section{Lower Bound on the Price of Stability}\label{sec:pos}

\subsection{Constructing Bad Instances}\label{sec:posBadInst}
In this section, we will establish a lower bound on the $\PoS$ and therefore also on the $\PoA$. We will construct a sequence of packet routing games $(\Gamma_i)_{i \in \N}$ with $\PoA(\Gamma_i) = \PoS(\Gamma_i)$ and $\lim\limits_{i \rightarrow \infty} \PoS(\Gamma_i) \geq \frac{e}{e-1}$, where each $G(\Gamma_i)\in \mathcal{G}$, i.e., each underlying graph is a linear multigraph. This lower bound is slightly above 1.582.

We introduce a subclass $\mathcal{G}_{k,\ell}\subseteq \mathcal{G}$ for $0\leq\ell<k$, $\ell \in \N_{\geq 0}$ and $k\in \N_{>0}$. A graph $G\in\mathcal{G}_{k,\ell}$ is a linear multigraph that consists of nodes $V=\{s=v_0,$ $v_1,\ldots,v_{k-\ell}=d\}$ and $k$ edges in every layer. In layer $j$ from $v_{j-1}$ to $v_j$, these $k$ edges consist of $j-1$ edges with a layer specific transit time of $\tau_j$ (called \emph{special edges}) and $k-(j-1)$ edges with a transit time of $1$ (called \emph{standard edges}). In Figure \ref{klayersl1}, the graphs are visualized for different values of $\ell$. 

An informal interpretation of the graph class $\mathcal{G}_{k,\ell}$ is at hand. The value $k$ clearly represents the number of edges in each layer of the graph. Furthermore, it also determines the maximum number of nodes in the graph, which is $k+1$. 
On the other hand, $\ell$ signifies the number of layers that are truncated from the graph. Thus, a graph in $\mathcal{G}_{k,\ell}$ has $k-\ell+1$ nodes.
Additionally, the amount of special edges incrementally increases by one as we move from one layer to the next, starting with zero special edges in the first layer.

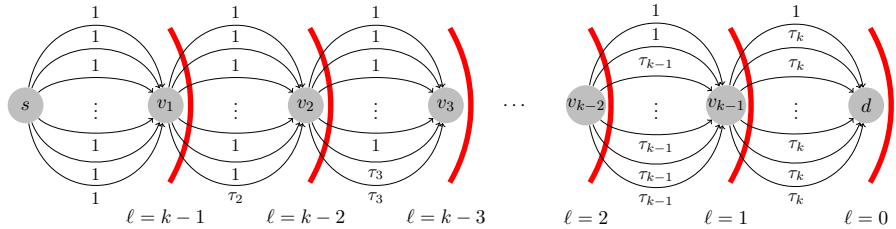
\begin{figure}[H]
    \resizebox{\linewidth}{!}{\tikzstyle{vertex}=[circle,fill=black!25,minimum size=20pt,inner sep=0pt]
\tikzstyle{novertex}=[]
\tikzstyle{smallvertex}=[circle,fill=black,minimum size=6pt,inner sep=0pt]
\tikzstyle{edge} = [draw,thin,->]
\tikzstyle{weight} = [font=\small]
\tikzstyle{selected edge} = [draw,line width=3pt,-,red!50]
\tikzstyle{big edge} = [draw,line width=2pt,->,black]
\tikzstyle{cutting edge} = [draw,line width=3pt,-,red]
\begin{tikzpicture}[scale=0.55,auto,swap]
    \node[vertex](s1) at (0,0){$s$};
    \node[vertex](s2) at (5,0){$v_1$};
    \node[novertex](s2o) at (5,3){};
    \node[novertex](s2u) at (5,-3){};
    \node[novertex](s2t) at (5,-4){$\ell=k-1$};

    \node[vertex](s3) at (10,0){$v_2$};
    \node[novertex](s3o) at (10,3){};
    \node[novertex](s3u) at (10,-3){};
    \node[novertex](s3t) at (10,-4){$\ell=k-2$};
    
    \node[vertex](s35) at (15,0){$v_3$};
    \node[novertex](s35o) at (15,3){};
    \node[novertex](s35u) at (15,-3){};
    \node[novertex](s35t) at (15,-4){$\ell=k-3$};
    
    \node[vertex](s4) at (20,0){$v_{k-2}$};
    \node[novertex](s4o) at (20,3){};
    \node[novertex](s4u) at (20,-3){};
    \node[novertex](s4t) at (20,-4){$\ell=2$};
    
    \node[vertex](s5) at (25,0){$v_{k-1}$};
    \node[novertex](s5o) at (25,3){};
    \node[novertex](s5u) at (25,-3){};
    \node[novertex](s5t) at (25,-4){$\ell=1$};
    
    \node[vertex](t) at (30,0){$d$};
    \node[novertex](to) at (30,3){};
    \node[novertex](tu) at (30,-3){};
    \node[novertex](tt) at (30,-4){$\ell=0$};

    \draw [edge] (s1) to[out=80,in=100, distance=3cm ] node[weight,above,black]{$1$} (s2);
    \draw [edge] (s1) to[out=70,in=110, distance=2cm ] node[weight,above,black]{$1$} (s2);
    \draw [edge] (s1) to[out=45,in=135, distance=1cm ] node[weight,above,black]{$1$} (s2);
    \node[text width=0.1cm] at (2.5,0) {$\vdots$};
    \draw [edge] (s1) to[out=-45,in=-135, distance=1cm ] node[weight,below,black]{$1$} (s2);
    \draw [edge] (s1) to[out=-70,in=-110, distance=2cm ] node[weight,below,black]{$1$} (s2);
    \draw [edge] (s1) to[out=-80,in=-100, distance=3cm ] node[weight,below,black]{$1$} (s2);

    \draw [cutting edge] (s2o) to[out=-60,in=60, distance=2cm ] node[weight,below,black]{} (s2u);    

    \draw [edge] (s2) to[out=80,in=100, distance=3cm ] node[weight,above,black]{$1$} (s3);
    \draw [edge] (s2) to[out=70,in=110, distance=2cm ] node[weight,above,black]{$1$} (s3);
    \draw [edge] (s2) to[out=45,in=135, distance=1cm ] node[weight,above,black]{$1$} (s3);
    \node[text width=0.1cm] at (7.5,0) {$\vdots$};
    \draw [edge] (s2) to[out=-45,in=-135, distance=1cm ] node[weight,below,black]{$1$} (s3);
    \draw [edge] (s2) to[out=-70,in=-110, distance=2cm ] node[weight,below,black]{$1$} (s3);
    \draw [edge] (s2) to[out=-80,in=-100, distance=3cm ] node[weight,below,black]{$\tau_2$} (s3);

    \draw [cutting edge] (s3o) to[out=-60,in=60, distance=2cm ] node[weight,below,black]{} (s3u); 

    \draw [edge] (s3) to[out=80,in=100, distance=3cm ] node[weight,above,black]{$1$} (s35);
    \draw [edge] (s3) to[out=70,in=110, distance=2cm ] node[weight,above,black]{$1$} (s35);
    \draw [edge] (s3) to[out=45,in=135, distance=1cm ] node[weight,above,black]{$1$} (s35);
    \node[text width=0.1cm] at (12.5,0) {$\vdots$};
    \draw [edge] (s3) to[out=-45,in=-135, distance=1cm ] node[weight,below,black]{$1$} (s35);
    \draw [edge] (s3) to[out=-70,in=-110, distance=2cm ] node[weight,below,black]{$\tau_3$} (s35);
    \draw [edge] (s3) to[out=-80,in=-100, distance=3cm ] node[weight,below,black]{$\tau_3$} (s35);

    \draw [cutting edge] (s35o) to[out=-60,in=60, distance=2cm ] node[weight,below,black]{} (s35u);
    
    \node[] at (17.5,0) {$\cdots$};

    \draw [cutting edge] (s4o) to[out=-60,in=60, distance=2cm ] node[weight,below,black]{} (s4u);
    
    \draw [edge] (s4) to[out=80,in=100, distance=3cm ] node[weight,above,black]{$1$} (s5);
    \draw [edge] (s4) to[out=70,in=110, distance=2cm ] node[weight,above,black]{$1$} (s5);
    \draw [edge] (s4) to[out=45,in=135, distance=1cm ] node[weight,above,black]{$\tau_{k-1}$} (s5);
    \node[text width=0.1cm] at (22.5,0) {$\vdots$};
    \draw [edge] (s4) to[out=-45,in=-135, distance=1cm ] node[weight,below,black]{$\tau_{k-1}$} (s5);
    \draw [edge] (s4) to[out=-70,in=-110, distance=2cm ] node[weight,below,black]{$\tau_{k-1}$} (s5);
    \draw [edge] (s4) to[out=-80,in=-100, distance=3cm ] node[weight,below,black]{$\tau_{k-1}$} (s5);

    \draw [cutting edge] (s5o) to[out=-60,in=60, distance=2cm ] node[weight,below,black]{} (s5u);    

    \draw [edge] (s5) to[out=80,in=100, distance=3cm ] node[weight,above,black]{$1$} (t);
    \draw [edge] (s5) to[out=70,in=110, distance=2cm ] node[weight,above,black]{$\tau_k$} (t);
    \draw [edge] (s5) to[out=45,in=135, distance=1cm ] node[weight,above,black]{$\tau_k$} (t);
    \node[text width=0.1cm] at (27.5,0) {$\vdots$};
    \draw [edge] (s5) to[out=-45,in=-135, distance=1cm ] node[weight,below,black]{$\tau_k$} (t);
    \draw [edge] (s5) to[out=-70,in=-110, distance=2cm ] node[weight,below,black]{$\tau_k$} (t);
    \draw [edge] (s5) to[out=-80,in=-100, distance=3cm ] node[weight,below,black]{$\tau_k$} (t);

    \draw [cutting edge] (to) to[out=-60,in=60, distance=2cm ] node[weight,below,black]{} (tu);
    
 \end{tikzpicture}

 } 
    \caption{Visualization of graphs in $\mathcal{G}_{k,\ell}$. For a given $\ell$, the graph $G$ consists of all layers left of the red curved line and the last node before the line serves as destination node $d$.}
    \label{klayersl1}
\end{figure}

\begin{restatable}{theorem}{satzPoAgeqee}\label{satz_PoAgeqee1}
There exists a sequence of packet routing games $(\Gamma_i)_{i \in \N_{>0}}$ on linear multigraphs with
$\lim\limits_{i \rightarrow \infty} \PoS(\Gamma_i) \geq \frac{e}{e-1}$ where each $G(\Gamma_i)\in \mathcal{G}$. Thus, we obtain $\PoS(\mathcal{G})\geq \frac{e}{e-1}\geq 1.582$.
\end{restatable}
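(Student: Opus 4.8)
The plan is to construct the family explicitly inside the subclass $\mathcal{G}_{k,\ell}$, letting $k\to\infty$ along the sequence and choosing the special-edge transit times $\tau_j$, the truncation $\ell$, and the number of players $n$ as suitable functions of $k$. First I would pin down the optimum via Proposition~\ref{thm:opt}. Because $G(\Gamma_i)\in\mathcal{G}$, an optimal state is (essentially) a temporally repeated flow on the $k$ greedily peeled edge-disjoint paths $P^1,\dots,P^k$, and in $\mathcal{G}_{k,\ell}$ these are fully explicit: the standard edges (transit $1$) of a layer are used by the lowest-indexed peeled paths and the $j-1$ special edges of layer $j$ by the highest-indexed ones, so that $\tau(P^r)=(k-r+1)+\sum_{j>k-r+1}\tau_j$. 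Proposition~\ref{thm:opt} then fixes $C(S^*)$ (up to the $\delta_j\in\{0,1\}$ corrections) as the unique value with $\sum_{r=1}^{k}\max\{0,\,C(S^*)+1-\tau(P^r)\}=n$, giving a closed expression once $\tau_j$ and $n$ are specified.

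Second I would analyze the equilibrium. The decisive structural point, which simultaneously yields $\PoA(\Gamma_i)=\PoS(\Gamma_i)$, is that within each layer all standard edges are mutually identical and all special edges are mutually identical; hence, although different tie-breaking policies assign different players to different edges, the induced arrival pattern $a_{v_j}(\cdot)$ — the multiset of arrival times, equivalently the per-time inflow at $v_j$ — is the same for every $S\in\Nash$. I would prove this layer by layer: Lemma~\ref{lemma_order} fixes that players arrive in index order, and the fastest-route condition fixes how many players leave each node at each time, independently of the policy. Having reduced to a single arrival pattern, I would compute it with the greedy queue policy of Proposition~\ref{prop_together}, propagating the inflow from $v_{j-1}$ to $v_j$: as long as the cheapest standard queue in layer $j$ yields an earlier arrival than a special edge, the excess inflow beyond the $k-j+1$ standard slots piles onto the standard edges and the queue grows; once the standard delay reaches $\tau_j$, the remaining excess is diverted onto the $j-1$ special edges. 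Reading off the arrival time of player $n$ at $d$ then yields $C^{\mathrm{eq}}=\min_{S\in\Nash}C(S)=\max_{S\in\Nash}C(S)$.

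Third I would form $C^{\mathrm{eq}}/C(S^*)$ and let $k\to\infty$ with $\ell$ and $n$ scaled as chosen. The factor $\tfrac{e}{e-1}$ is expected to appear because the fraction of fast standard capacity in layer $j$ is $\tfrac{k-j+1}{k}=1-\tfrac{j-1}{k}$, so that tracking how the inflow spreads across layers produces products $\prod_j\bigl(1-\tfrac{j-1}{k}\bigr)$ (equivalently partial sums of $\tfrac{1}{k}\sum_j(j-1)$) whose asymptotics are governed by $e^{-1}$; choosing the scaling so that these converge turns the makespan ratio into $\tfrac{e}{e-1}$ in the limit, with the integer corrections $\delta_j$ and the buildup transient contributing only lower-order terms.

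The main obstacle is the equilibrium computation of the second step in the transient (non-saturated) regime: unlike the optimum, where Proposition~\ref{thm:opt} hands over a clean temporally repeated flow, the equilibrium inflow at $v_j$ must be derived from the FIFO queue dynamics, and it is delicate to (i) determine exactly when the standard queue in each layer first reaches the switching threshold $\tau_j$, (ii) track the resulting spread of the inflow across subsequent layers in closed enough form to take a limit, and (iii) control the integer rounding so that it is negligible as $k\to\infty$. I would expect the scaling of $n$, $\ell$, and $\tau_j$ in $k$ to be dictated precisely by the requirement that this transient produces the $e^{-1}$ term; pinning that scaling down is the crux, after which the limit and the equality $\PoS=\PoA$ follow.
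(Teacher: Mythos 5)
Your scaffolding matches the paper's: the instances live in $\mathcal{G}_{k,\ell}$, the optimum is controlled via Proposition~\ref{thm:opt}, and the ratio is taken in a limit. But the proposal leaves unresolved exactly the step that carries the proof, and you flag it yourself as ``the crux.'' The decisive idea in the paper is that the special-edge transit times are not generic parameters to be tuned later but are set to $\tau_j = \frac{n}{k-j+1}-\frac{n}{k-j+2}+2$ with $n=k!$ (so all the relevant fractions are integers). With this choice one shows by a short induction that in \emph{every} equilibrium exactly $k-j+1$ players arrive at $v_j$ at each time step $j\le t\le \frac{n}{k-j+1}+j-1$, so the per-edge queue in layer $j+1$ peaks at exactly $\frac{n}{k-j}-\frac{n}{k-j+1}$, which is strictly below $\tau_{j+1}$. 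Hence the ``switching threshold'' you worry about in items (i)--(iii) is never reached: special edges are never used by any equilibrium, the transient regime you anticipate does not exist, every equilibrium has the same completion time $(k-\ell-1)+\frac{n}{\ell+1}$ (which is also what gives $\PoS=\PoA$ on these instances --- your argument that identical edges within a layer force a unique arrival pattern is not by itself sufficient, since a priori some equilibria could divert onto special edges and others not). Without committing to this choice of $\tau_j$, your step two is not a proof but a description of a computation you have not carried out.

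The second gap is the scaling that produces the constant. You attribute the $e^{-1}$ to products $\prod_j\bigl(1-\tfrac{j-1}{k}\bigr)$; in the actual computation it arises differently: the truncation is set to $\ell=i$ with $k=\lceil e i\rceil$, the bound on $C(S^*)$ from Proposition~\ref{thm:opt} contains (after a telescoping of $\sum_j (j-1)\tau_j$) the term $\frac{\ell+1}{k}\sum_{j=\ell+2}^{k}\frac{1}{j}$, and the limit $\frac{\ell+1}{k}\to\frac1e$ together with $H_{\lceil e\ell\rceil}-H_{\ell+1}\to\ln e=1$ yields $\frac{1}{1-1/e}=\frac{e}{e-1}$. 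So the ratio $\ell/k\to 1/e$ is the scaling you say remains to be pinned down, and the mechanism is a harmonic-sum difference rather than a product over layers. In short: correct family and correct outline, but the two ingredients that make the argument work --- the exact $\tau_j$ that forecloses any use of special edges, and the $k\approx e\ell$, $n=k!$ scaling feeding the harmonic-number asymptotics --- are missing, and these are precisely where the proof's content lies.
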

\begin{proof}
    For each $i\in \N_{>0}$, we choose $\Gamma_i$ to be the game characterized by $G(\Gamma_i)\in \mathcal{G}_{k,\ell}$ with $\ell = i$ and $k=\lceil ei\rceil$. The player set is $N(\Gamma_i)=[n_i]$ with $n_i= k!= \lceil ei\rceil!$ and the transit time of the special edges in layer $j\geq 2$ is set to
\begin{align*}
    \tau_j = \frac{n_i}{k-j+1} - \frac{n_i}{k-j+2} +2.
\end{align*}
The underlying concept of this construction can be explained as follows. The transit times on the special edges are deliberately set high to ensure that no equilibrium flow uses them. As a result, the network progressively narrows with each layer, leading to inevitable queuing on all standard edges. Note that the transit times of the special edges increase for higher layers. Thus, we additionally truncate the construction at a certain point to prevent the narrow layers from significantly enlarging the social optimum too much.
We begin by showing by induction on the arrival times of players at intermediate nodes that no special edges are used in any equilibrium. Specifically, we show that for $j\geq 1$ at each time step $j \leq t \leq \frac{n_i}{k-j+1}+j-1$ exactly $k-j+1$ players arrive at $v_j$. 
Consider a fixed game $\Gamma_i$ and an arbitrary equilibrium $S_i\in \Nash^{\Gamma_i}$.
First, as the first layer is solely composed of $k$ standard edges, trivially no player uses a special edge and it is immediate that $k$ players arrive at $v_1$ at each time $1\leq t \leq \frac{n_i}{k}$. 
Suppose by induction hypothesis that this holds up to layer $j$, i.e., $k-j+1$ players arrive at time steps $j \leq t \leq \frac{n_i}{k-j+1}+j-1$ at $v_j$. We have $k-j$ standard edges in layer $j+1$, hence, the throughput on standard edges is one smaller than the inflow. Therefore, in each time step, the number of players in queues in this layer increases by 1. After $\frac{n_i}{k-j+1}$ time steps, all $n_i$ players arrived at $v_j$. At this point in time, there are $\frac{n_i}{k-j+1}$ queued players in $k-j$ many queues, that is, each queue has length exactly $\frac{n_i}{(k-j+1)(k-j)}=\frac{n_i}{k-j}-\frac{n_i}{k-j+1}$. Since $\tau_{j+1}$ is strictly larger than this value, the special edges in layer $j+1$ are indeed never used. Furthermore, $k-j=k-(j+1)+1$ players arrive at each time step $j+1 \leq t \leq \frac{n_i}{k-j+2}+j$ at $v_{j+1}$.

As no special edges are used by any equilibrium $S_i\in\Nash$, this implies that every equilibrium has the same completion time. The $n_i$-th player arrives at node $v_j$ at time $\frac{n_i}{k-j+1}+ j-1$ and thus we have
\begin{align}
    C^{\Gamma_i}(S_i) &= (k-\ell) + \frac{n_i}{k- (k-\ell)+1} -1 = (k-\ell -1) + n_i\cdot \left(\frac{1}{\ell+1}\right).\label{eq:nash}
\end{align}
It remains to compute the completion time for the social optimum. 
Let $P^1$ denote a shortest $s-d$ path in $G$ and $P^j$ a shortest $s-d$ path in $G$ after deleting paths $P^1, \dots, P^{j-1}$. According to Section \ref{sec:understandingopt}, there is an optimal state $S^*_i$ of $\Gamma_i$ of the following form. $C^{\Gamma_i}(S^*_i)+1-\tau(P^1)$ packets use $P^1$ and $C^{\Gamma_i}(S^*_i)+\delta_j -\tau(P^j)$ packets use $P^j$ with $\delta_j\in\{0,1\}$ for \mbox{$2\leq j \leq k'\leq k$} with \mbox{$\sum_{j=2}^{k'} \delta_j$}$ = $\mbox{$n_i-k'\cdot C^{\Gamma_i}(S^*_i) -1 +\sum_{j=1}^{k'} \tau(P^j)$}. 
Together with the fact that the optimal state does not use paths $P^{k'+1}, \dots, P^k$, i.e., $\tau(P^k)\geq \dots \geq \tau(P^{k'+1}) > C^{\Gamma_i}(S^*_i)$, we obtain that the total number of packets sent by $S^*_i$ is
\begin{align*}
    n_i &= C^{\Gamma_i}(S^*_i)+1-\tau(P^1) + \sum_{j=2}^{k'} \left(C^{\Gamma_i}(S^*_i)+\delta_j -\tau(P^j)\right)\nonumber\\
    &= k'\cdot C^{\Gamma_i}(S^*_i)+1-\sum_{j=1}^{k'}\tau(P^j) + \sum_{j=2}^{k'} \delta_j \\
    &\geq k'\cdot C^{\Gamma_i}(S^*_i)-\sum_{j=1}^{k'}\tau(P^j)\nonumber\\
    &\geq k\cdot C^{\Gamma_i}(S^*_i)-\sum_{j=1}^{k}\tau(P^j)  \\
    &= k \cdot C^{\Gamma_i}(S^*_i)-  \Biggl(\underbrace{\sum\limits_{j=1}^k j - \sum\limits_{j=1}^{\ell} j }_{\text{standard edges}}+ \underbrace{\sum\limits_{j=2}^{k-\ell} (j-1)\tau_j}_{\text{special edges}}\Biggr)\nonumber\\
    &= k\cdot C^{\Gamma_i}(S^*_i) - \left(\frac{k^2+k}{2} - \frac{\ell^2+\ell}{2} + \sum\limits_{j=2}^{k-\ell} (j-1)\tau_j\right).
\end{align*}
Rearranging terms implies an upper bound on $C^{\Gamma_i}(S^*_i)$.
\[C^{\Gamma_i}(S^*_i)\leq \frac{1}{k}\left(n_i+\frac{k^2+k}{2} - \frac{\ell^2+\ell}{2} +\sum\limits_{j=2}^{k-\ell} (j-1)\tau_j\right)\]
With the help of the telescope sum 
\begin{align*}
    &\sum\limits_{j=2}^{k-\ell} (j-1)\left( \frac{n}{k-j+1} -  \frac{n}{k-j+2}+2\right) \nonumber\\
    = & (k-\ell)(k-\ell-1) + (k-\ell-1)\left( \frac{n}{\ell+1}\right) - \sum_{j=0}^{k-\ell-2} \frac{n}{k-j}
\end{align*}
we get
\begin{align}
    C^{\Gamma_i}&(S^*_i) \leq \frac{1}{k}\left(n_i+\frac{k^2+k}{2} - \frac{\ell^2+\ell}{2} + (k-\ell)(k-\ell-1) \right.\nonumber\\
    &\hspace{1.7cm}\left.+ (k-\ell-1)\left( \frac{n_i}{\ell+1}\right) - \sum\limits_{j=0}^{k-\ell-2} \frac{n_i}{k-j}\right)\nonumber\\
    &= \left( \frac{3k^2-4k\ell-k+\ell^2+\ell}{2k}\right) + n_i \cdot \left( \frac{k-\ell-1}{(\ell+1)k} + \frac{1}{k} -\frac{1}{k}\sum\limits_{j=\ell+2}^{k} \frac{1}{j}\right).\label{eq:optUB}
\end{align}
For the j-th harmonic number $H_j$, it is well known that $H_j= \ln(j) + \gamma + \frac{1}{2j} - \varepsilon_j$ with $0\leq \varepsilon_j \leq \frac{1}{8j^2}$ and the Euler-Mascheroni constant $\gamma \approx 0.577$. Thus,
\begin{align}
    \lim\limits_{\ell \rightarrow \infty} \sum\limits_{j=\ell+2}^{\left\lceil e \ell\right\rceil} \frac{1}{j} 
    &= \lim\limits_{\ell \rightarrow \infty} \left(H_{\left\lceil e \ell\right\rceil}-H_{\ell+1}\right)\nonumber \\
    &=  \lim\limits_{\ell \rightarrow \infty} \left(\underbrace{\ln\left(\frac{\left\lceil e \ell\right\rceil}{\ell+1}\right)}_{\rightarrow \ln(e) = 1} + \underbrace{\frac{1}{2\left\lceil e \ell\right\rceil} - \varepsilon_{\left\lceil e \ell\right\rceil} - \frac{1}{2(\ell+1)} + \varepsilon_{\ell+1}}_{\rightarrow 0}\right) = 1 .\label{eq:harmonic}
\end{align}
Since the growth of $n_i$ dominates the term, we obtain 
\begin{align*}
    \PoS(\mathcal{G}) &\geq \lim\limits_{i \rightarrow \infty} \PoS(\Gamma_i) =\lim\limits_{i \rightarrow \infty} \frac{C^{\Gamma_i}(S_i)}{C^{\Gamma_i}(S^*_i)}\\
    &\hspace{-0.25cm}\overset{(\ref{eq:nash}), (\ref{eq:optUB}) }{\geq }\lim\limits_{i \rightarrow \infty} \frac{(k-\ell -1) + n_i\cdot \left(\frac{1}{\ell+1}\right)}{ \left( \frac{3k^2-4k\ell-k+\ell^2+\ell}{2k}\right) + n_i \cdot \left( \frac{k-\ell-1}{(\ell+1)k} + \frac{1}{k} -\frac{1}{k}\sum\limits_{j=\ell+2}^{k} \frac{1}{j}\right)}.
\end{align*}
By simplifying this expression, we get
\begin{align*}
    \PoS(\mathcal{G})&\geq \lim\limits_{i  \rightarrow \infty} \frac{\frac{1}{\ell+1}}{ \frac{k-\ell-1}{(\ell+1)k} + \frac{1}{k} -\frac{1}{k}\sum\limits_{j=\ell+2}^{k} \frac{1}{j}} 
    = \lim\limits_{i \rightarrow \infty} \frac{1}{ 1 -\frac{\ell+1}{k} \sum\limits_{j=\ell+2}^{k} \frac{1}{j}} \\
    &=  \lim\limits_{i \rightarrow \infty} \frac{1}{ 1 -\underbrace{\frac{\ell+1}{\left\lceil e \ell\right\rceil}}_{\rightarrow \frac{1}{e}} \underbrace{\sum\limits_{j=\ell+2}^{\left\lceil e \ell\right\rceil} \frac{1}{j}}_{\rightarrow 1, \text{ due to } (\ref{eq:harmonic})}} 
    = \frac{1}{1-\frac{1}{e}\cdot 1}
    = \frac{1}{\left(\frac{e-1}{e}\right)}
    = \frac{e}{e-1}\;,
\end{align*}
which finishes the proof.
\end{proof}

\subsection{Implications for Flows over Time}\label{sec:FlowOverTime}
Flows over time can essentially be seen as the continuous variant of packet routings. For a comprehensive introduction to the topic, we refer to the survey article of Skutella~\cite{Skutella2009survey}. 
We will follow its lines and recall the most basic definitions in Appendix \ref{app:FlowOverTime}.

Correa et al.\ proved an upper bound on the makespan-$\PoA$ of Nash flows over time of $\frac{e}{e-1}$ if a \emph{monotonicity conjecture} holds~\cite{DBLP:journals/mor/CorreaCO22}. In essence, the monotonicity conjecture states that the completion time of a Nash flow over time does not decrease as the inflow rate increases while the total amount of flow and the underlying network remains the same.
\begin{proposition}[Correa et al.~\cite{DBLP:journals/mor/CorreaCO22}]
    The makespan-$\PoA$ is upper bounded by $\frac{e}{e-1}$ if the monotonicity conjecture holds.
\end{proposition}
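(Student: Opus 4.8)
\noindent The plan is to compare the makespan of an arbitrary Nash flow over time that delivers a fixed volume $M$ through the network against the makespan $T^\ast$ of the quickest flow delivering the same volume, and to show that the ratio never exceeds $\frac{e}{e-1}$ once the monotonicity conjecture is available. Throughout I would take $M \to \infty$, so that lower-order additive terms drop out and the comparison becomes a clean ratio of the two rates at which flow is absorbed by the sink.

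First I would control the denominator. By the classical theory of Ford and Fulkerson~\cite{ford1958constructing} and Fleischer and Tardos~\cite{DBLP:journals/orl/FleischerT98}, the quickest flow for a volume $M$ is a temporally repeated flow that saturates a minimum $s$--$d$ cut, injecting at the network capacity $\nu$ (the value of a maximum static flow) over an interval of length $M/\nu$. Its earliest-arrival property yields $T^\ast \ge M/\nu$ up to an additive shortest-path term that becomes negligible as $M \to \infty$, and it identifies $u^\ast = \nu$ as the inflow rate used by the optimum.

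Second, and this is the unconditional core, I would bound the equilibrium makespan in the case where the equilibrium inflow rate is capped at $u^\ast = \nu$. Here I would invoke the thin-flow description of Nash flows over time of Koch and Skutella~\cite{kochskutella2011}: the throughput across a minimum cut in equilibrium also tends to $\nu$, and the cumulative arrival at the sink can be compared to the cumulative inflow through the queue dynamics. Integrating this comparison is where the constant is produced: in the extremal regime the volume not yet delivered decays exponentially relative to the optimal schedule, and $\tfrac{e}{e-1} = (1-e^{-1})^{-1}$ is exactly the worst-case ratio that results, attained only in a limit.

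Finally, I would use the monotonicity conjecture to lift this restricted bound to the full price of anarchy. Since the conjecture asserts that the equilibrium completion time is non-decreasing in the inflow rate for a fixed volume and network, the worst-case ratio can be reduced to the canonical inflow rate $u^\ast = \nu$ at which the second step applies; monotonicity is precisely what lets one replace an arbitrary inflow profile by the optimum's without breaking the comparison. The main obstacle is twofold: deriving the sharp exponential relation between cumulative in- and out-flow from the thin-flow structure, and rigorously justifying the inflow-rate reduction. The latter is exactly why the statement is conditional --- absent monotonicity there is no control of the equilibrium makespan away from the matched inflow rate, so the entire bound rests on that hypothesis.
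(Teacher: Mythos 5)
The first thing to note is that the paper does not prove this proposition at all: it is an external result of Correa, Cristi and Oosterwijk, quoted verbatim with attribution so that the lower-bound construction of Proposition~\ref{prop_propositionFlowOverTimeLowerBound} can be matched against it. There is therefore no in-paper proof to compare your attempt to; what you have written is an attempted reconstruction of the cited theorem, and it must be judged on its own terms. To your credit, the architecture you describe --- quickest flow as a temporally repeated flow saturating a minimum cut, a comparison of cumulative arrivals at the sink via the thin-flow characterization of Koch and Skutella, an exponential estimate producing $\frac{e}{e-1}$, and monotonicity to handle the inflow rate --- is recognizably the shape of the argument in~\cite{DBLP:journals/mor/CorreaCO22}.

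On its own terms, however, the reconstruction has two genuine gaps. First, the entire quantitative content of the theorem --- deriving the constant $\frac{e}{e-1}$ from the queueing dynamics --- is asserted rather than proved (``the volume not yet delivered decays exponentially \dots and $\frac{e}{e-1}$ is exactly the worst-case ratio that results''). That exponential relation between cumulative in- and outflow \emph{is} the theorem; without it you have restated the claim, not established it. Your $M\to\infty$ framing compounds the problem: a per-instance bound of the form $\frac{e}{e-1}+o(1)$ does not bound the supremum over all instances that defines the $\PoA$, so the additive terms cannot simply be declared negligible. Second, the monotonicity step points the wrong way as written. Monotonicity says the equilibrium makespan is non-decreasing in the inflow rate; if an instance's inflow rate exceeds the optimum's rate $\nu$, replacing it by $\nu$ can only \emph{decrease} the equilibrium makespan, so an upper bound proved at rate $\nu$ does not transfer upward to the original rate by monotonicity alone. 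Note also that this paper's own account of the cited result (``these findings are contingent upon a monotonicity conjecture'') indicates that even the restricted-inflow-rate bound is conditional, contrary to your claim that this part is unconditional. A correct reconstruction must make precise at which inflow rate the $\frac{e}{e-1}$ estimate is actually established and in which direction the conjecture is invoked to cover the remaining rates.
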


\noindent Correa et al.~\cite{DBLP:journals/mor/CorreaCO22} proved the monotonicity conjecture for linear multigraphs. Interestingly, the same authors also provided a lower bound on the $\PoA$ of $\frac{e}{e-1}$, by a sequence of instances for which the monotonicity conjecture could not been proven. We tighten the $\PoA$ bound on linear multigraphs by observing that the instance used in the proof of Theorem \ref{satz_PoAgeqee1} can be extended to Nash flows over time. The proof exploits the fact that the instance is constructed with evenly loaded edges, such that interpreting integral flow values continuously does not allow for any improvement in the travel time for any flow particle.

\begin{restatable}{proposition}{propositionFlowOverTimeLowerBound}\label{prop_propositionFlowOverTimeLowerBound}
    The makespan-$\PoA$ is lower bounded by $\frac{e}{e-1}$ even in linear multigraphs.
\end{restatable}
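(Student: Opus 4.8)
The plan is to recycle the very instances constructed in the proof of Theorem~\ref{satz_PoAgeqee1} and to reinterpret each game $\Gamma_i$ on $G(\Gamma_i)\in\mathcal{G}_{k,\ell}$ as a flow-over-time instance on the same graph: every edge keeps its transit time and is assigned capacity $1$, and flow is injected at the source $s$ at constant rate $u_0=k$ over the interval $[0,\,n_i/k)$, so that exactly $n_i$ units enter the network. I would first argue that the (essentially unique) Nash flow over time on this instance routes no flow across any special edge, so that its makespan agrees, up to an additive $O(k-\ell)$ term, with the discrete equilibrium makespan $C^{\Gamma_i}(S_i)$ from~\eqref{eq:nash}. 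I would then upper-bound the optimal makespan of the flow-over-time instance by the discrete optimum and conclude with the same limit computation as in Theorem~\ref{satz_PoAgeqee1}.

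For the equilibrium I would proceed by induction over the layers, mirroring Theorem~\ref{satz_PoAgeqee1}. Using the standard characterization of Nash flows over time (flow only travels along currently shortest $s$--$v$ paths, cf.~\cite{kochskutella2011,Skutella2009survey}), it suffices to show that in each layer $j$ the waiting time accumulated on the standard edges never reaches $\tau_j-1$, so that the special edges are never on a currently shortest path and hence carry no flow. Since the first layer has total capacity $k$ equal to the inflow rate, flow reaches $v_1$ at rate $k$; inductively, layer $j$ receives inflow at rate $k-j+2$ while its $k-j+1$ standard edges have total capacity $k-j+1$, so the symmetric queues grow at total rate $1$ and each standard edge attains a maximal waiting time of $\frac{n_i}{k-j+1}-\frac{n_i}{k-j+2}<\tau_j-1$. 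This is exactly the inequality already verified in the proof of Theorem~\ref{satz_PoAgeqee1}, so it transfers verbatim. Consequently flow leaves $v_{j}$ at rate $k-j+1$, the last particle reaches $d=v_{k-\ell}$ at time $(k-\ell)+\frac{n_i}{\ell+1}$, and the Nash makespan matches~\eqref{eq:nash} up to the additive constant.

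To bound the optimum I would invoke that every discrete state can be read as a feasible flow over time on the same network without increasing its makespan, as already exploited in the proof of Theorem~\ref{thm:opt}. Hence the optimal makespan of the flow-over-time instance is at most the discrete optimum $C^{\Gamma_i}(S^*_i)$, which is bounded above by the right-hand side of~\eqref{eq:optUB}. Dividing the Nash makespan by this bound yields precisely the ratio analysed in Theorem~\ref{satz_PoAgeqee1}; since $n_i=\lceil ei\rceil!$ dominates and the $(k-\ell)$ versus $(k-\ell-1)$ discrepancy is a bounded additive term, the limit as $i\to\infty$ is again $\frac{e}{e-1}$, establishing that the makespan-$\PoA$ of Nash flows over time on linear multigraphs is at least $\frac{e}{e-1}$.

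The main obstacle is the verification that the described flow is genuinely a Nash flow over time: as noted in the related-work discussion, a discrete equilibrium need not translate into a continuous equilibrium in general. The delicate point is to handle the continuous FIFO queue dynamics and the precise currently-shortest-path condition, rather than the per-time-step argument of the discrete model. The even, symmetric loading of the standard edges, which keeps all parallel queues equal and bounded strictly below $\tau_j-1$, is exactly what makes this translation valid here, since no infinitesimal particle can gain by switching to a special edge. The remaining care is confined to the lower-order boundary terms, which vanish in the limit.
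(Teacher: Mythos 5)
Your proposal is correct and follows essentially the same route as the paper: reuse the instances of Theorem~\ref{satz_PoAgeqee1}, show that the symmetric queue growth keeps every standard edge's latency strictly below $\tau_j$ so that no special edge ever lies on a currently shortest path in the continuous equilibrium, bound the continuous optimum by the discrete one, and run the identical limit computation. The only cosmetic differences are that the paper verifies the Nash condition by translating the discrete equilibrium into a piecewise-constant flow and exploiting that $n_i=k!$ loads all standard edges of a layer simultaneously and equally (rather than your direct rate-based induction), and that it bounds the continuous optimum via a temporally repeated earliest arrival flow together with Proposition~\ref{thm:opt} instead of embedding the discrete optimal state as a feasible flow over time.
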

\section{Conclusion}
We provided an upper bound of $2$ on the $\PoA$ in linear multigraphs for one of the most natural packet routing games. Interestingly, our upper bound is independent of the network size and the number of players which is in stark contrast to other packet routing games. Furthermore, a sequence of linear multigraphs with a $\PoS$ that converges to at least $\frac{e}{e-1}$ has been presented. 
It is evident that our findings extend to serial concatenations of parallel path networks through subdivisions of edges in linear multigraphs and we obtain the same bounds on the $\PoA$ in an extended model where we allow for integer-valued edge capacities, as shown in Appendix \ref{sec:edgcap}. 

A natural next step would be to further tighten the bounds and establish an upper bound on the PoA for more general graph classes. Furthermore, considering the PoA when not restricted to UFR equilibria is an interesting research direction. Future research could explore the PoA for other relevant social objective functions as well.



%
%
%
\bibliographystyle{plainnat}
\bibliography{references}

\ifappendix
\newpage
\appendix
\section{Appendix}\label{app}
\subsection{Choice of Equilibria}\label{sec:equi}
When dropping the assumption that a player needs to be as fast as possible at every intermediate node and instead only needs to arrive as early as possible at the destination, strange behavior patterns can arise that do not necessarily mirror the behavior of road traffic participants in reality. In particular, there might be instances where players with a higher index can arrive at intermediate nodes strictly earlier than players with a lower index and therefore overtake a player with higher priority for some time. Only due to the tie-breaking rules, they will later be overtaken again.  
Consider the following graph, visualized in Figure \ref{fig:overtakingInNE}, with the strategy profile
$P_{(1)}=(e_1^1,e_1^2,e_1^3), P_{(2)}=(e_2^1,e_2^2,e_1^3)$, $P_{(3)}=(e_3^1,e_1^2,e_1^3)$, $P_{(4)}=(e_1^1,e_2^2,e_1^3)$, $P_{(5)}=(e_2^1,e_1^2,e_1^3)$, $P_{(6)}=(e_3^1,e_2^2,e_1^3)$, $P_{(7)}=(e_1^1,e_1^2,e_1^3)$, \mbox{$P_{(8)}=(\mathbf{e_4^1},e_2^2,e_1^3)$}, $P_{(9)}=(e_2^1,e_2^2,e_1^3)$.
Clearly, due to all players arriving at $t$ in order of their index one at a time beginning at time $3$ all players arrive as early as possible at the destination. Note that player 8 takes a detour by taking the longer edge $e_4^1$ and is strictly later at node $v_1$ than player 9, while arriving at $d$ strictly before player $8$. Hence this state is not an UFR equilibrium but all players arrive as early as possible at the destination.
\begin{figure}[h]
    \centering
    \tikzstyle{vertex}=[circle,fill=black!25,minimum size=20pt,inner sep=0pt]
\tikzstyle{smallvertex}=[circle,fill=black,minimum size=6pt,inner sep=0pt]
\tikzstyle{edge} = [draw,thin,->]
\tikzstyle{weight} = [font=\small]
\tikzstyle{selected edge} = [draw,line width=3pt,-,red!50]
\tikzstyle{big edge} = [draw,line width=2pt,->,black]
\begin{tikzpicture}[scale=0.55,auto,swap]
    \node[vertex](s1) at (0,0){$s$};
    \node[vertex](s2) at (5,0){$v_1$};
    \node[vertex](s3) at (10,0){$v_2$};
    \node[vertex](s4) at (15,0){$d$};

    \draw [edge] (s1) to[out=60,in=120, distance=2cm ] node[weight,below,black]{$e_1^1$} (s2);
    \draw [edge] (s1) to[out=20,in=160, distance=1.3cm ] node[weight,below,black]{$e_2^1$} (s2);
    \draw [edge] (s1) to[out=-20,in=200, distance=1.3cm ] node[weight,below,black]{$e_3^1$} (s2);
    \draw [edge] (s1) to[out=-60,in=240, distance=2cm ] node[weight,below,black]{$e_4^1$} (s2);
    
    \draw [edge] (s2) to[out=45,in=135, distance=1cm ] node[weight,below,black]{$e_1^2$} (s3);
    \draw [edge] (s2) to[out=-45,in=-135, distance=1cm ] node[weight,below,black]{$e_2^2$} (s3);

    \draw [edge] (s3) to[out=0,in=180, distance=1cm ] node[weight,below,black]{$e_1^3$} (s4);
 \end{tikzpicture}
    \caption{The transit time of edge $e_4^1$ is $4$. All other edges have a transit time of $1$. Network, in which there is a state for $9$ players, where player 9 is strictly earlier at node $v_1$ than player 8, while simultaneously all players arrive as early as possible at the destination $d$.}
    \label{fig:overtakingInNE}
\end{figure}
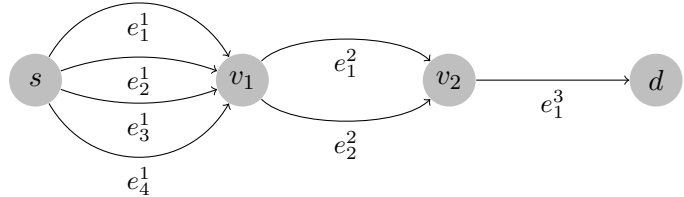
\subsection{Omitted Proof of Section \ref{sec:understandingopt}}\label{app:understandingopt}

\satzOptStructure*
\noindent
\begin{proof}
    Ford and Fulkerson~\cite{ford1958constructing} showed that a packet routing maximizing the number of packets arriving at $d$ until $T$ in linear multigraphs with edge capacity equal to one can be computed by deleting shortest paths as long as the distance $dist(s,d)$ is smaller than $T$.
        \begin{enumerate}
        \item Initialize $x_P=0$ for all $P\in \mathcal{P}$.
        \item WHILE $dist(s,d)\leq T$:\\
        $\qquad$ Compute a shortest $s-d$ path $P$ in $G$, delete $P$ from $G$ and set $x_P = 1$.
        \item Return a temporally repeated flow $\tilde{S}$ for $(x_P)_{P\in \mathcal{P}}.$
        \end{enumerate}
    The temporally repeated flow $\tilde{S}$ is defined as follows. For all paths $P$ with $x_P = 1$ it sends packets into $P$ at time steps $\{0, 1, \dots, T-\tau(P)\}$. It is worth noting that for arbitrary graphs one needs to base the calculations above on the residual graphs. Here, we could simplify the algorithm as linear multigraphs are series-parallel and a shortest $s-d$ path would never use any backward edge in the residual graph.
    
    It is immediate to see that when instead sending $T-\tau(P)$ packets into $P$ at time $0$ and stopping afterwards, all packets immediately enter the queue of the first link. We obtain a state of the packet routing game where no packet waits outside of the first layer. 
    
    To finish the proof, fix an arbitrary optimal state $S^*$ for the packet routing game with $n$ players with latest arrival time $C(S^*)$. By~\cite{ford1958constructing} there is a temporally repeated flow $\tilde{S}$ for time horizon $C(S^*)$ sending $n'\geq n$ packets to $d$ up to time $C(S^*)$. If $n'> n$, we have $n' = j + n$ for some $1\leq j\leq k-1$, since otherwise $\Tilde{S}^*$ would not be optimal. To send the correct amount of packets, we delete the last packet from each of the paths $P^k,P^{k-1},\ldots, P^{k-j+1}$. Since $\tilde{S}$ is feasible, the obtained flow is also feasible. We have constructed an optimal state that fulfills all claimed properties.
    %
\end{proof}

\subsection{A Model Extension to Arbitrary Edge Capacities}\label{sec:edgcap}
We can generalize the model by allowing edges $e$ to have capacities $\nu_e\in \N_{>0}$ which may differ from one. In contrast to the model presented in the Section \ref{sec:prelim}, in the network loading process at any point in time, $\nu_e$ players are now allowed to leave the queue $q_e$. In this section, we will argue, with the help of a reduction, that the $\PoA$ and the $\PoS$ are still bounded as in the base model.

Given a game $\Gamma$ with edge capacities that may differ from one, we can construct a game $\Gamma'$ by replacing every edge $e=(v,w)$ with transit time $\tau(e)$ and capacity $\nu_e$ by $\nu_e$ many edges $e^1, \ldots,e^{\nu_e}=(v,w)$ each with transit time $\tau(e)$ and unit capacity.
For an arbitrary equilibrium $S$ of $\Gamma$, we can construct a strategy profile $S'$ of $\Gamma'$ as follows: for each player $i$, we replace in the player's strategy every edge $e\in P_{(i)}$ by $e^j$, where $j=((q_e(i)-1 )\mod \nu_e)+1$. In this context, $q_e(i)$ denotes the queuing position of player $i$ on edge $e$ when she uses this edge. As a result of this construction, the arrival times at every node and, consequently, the completion times are identical, that is, $C^{\Gamma}(S)=C^{\Gamma'}(S')$. Furthermore, $S'$ is an equilibrium of $\Gamma'$. If this were not the case, there would exist a player who could strictly improve her completion time by changing her strategy. By changing her strategy in $S$ to the corresponding edges she achieves a better completion time in $\Gamma$. This contradicts the fact that $S$ is an equilibrium. Therefore, we have 
$$\sup_{S\in \Nash^{\Gamma}} C^{\Gamma}(S) \leq \sup_{S'\in \Nash^{\Gamma'}} C^{\Gamma'}(S').$$
Let $S^*$ be an optimal state of $\Gamma$ and $(S')^*$ be the strategy profile of $\Gamma'$ that arises from the transformation of $S^*$ as described above. The optimal completion time in $\Gamma'$ is at most as big as $C^{\Gamma'}((S')^*)=C^{\Gamma}(S^*)$. 
In total, we obtain a game $\Gamma'$ with unit capacities for every game $\Gamma$ with $\PoA(\Gamma)\leq \PoA(\Gamma')$. Hence, we can apply Theorem \ref{satz_PoA2} to conclude that
\begin{align*}
    \PoS(\mathcal{H}) \leq \PoA(\mathcal{H}) = \sup\limits_{\Gamma: G(\Gamma) \in \mathcal{H}} \PoA(\Gamma) \leq \sup\limits_{\Gamma': G(\Gamma') \in \mathcal{H}, \nu_e = 1 \forall e} \PoA(\Gamma') \leq 2.
\end{align*}
In conjunction with Theorem \ref{satz_PoAgeqee1}, this leads to the conclusion that for the extended model with edge capacities, it holds that $$\frac{e}{e-1}\leq PoS(\mathcal{G}) \leq \PoA(\mathcal{G})\leq 2.$$ 

\subsection{Omitted Proofs of Section \ref{sec:poa}}\label{app:sec4}
\lemmaAli*
\begin{proof}
As defined in the network loading, we refer to an edge as \emph{used (at time $t$)} when a player leaves the queue of this edge at time $t$.
If, at any time $t$, there is a used edge $e^1_j$ that has a strictly higher 
workload than another used edge $e^1_i$ with $i<j$, then the player with the largest index queuing on edge $e^1_j$ could have chosen edge $e^1_i$ without worsening her latency since the queue lengths on all used edges decrease at the same rate. This contradicts the assumption that the player behaves according to \badNash.

Conversely, suppose at some time $t$, there is a used edge $e^1_j$ and another used edge $e^1_i$ with $i<j$ and $l_{e^1_j}(\badNash,t)+1<l_{e^1_i}(\badNash,t)$.
Then the player with the largest index to queue on edge $e_i$ could have chosen edge $e_j$ instead. This contradicts the assumption that we are in an equilibrium.
Together this yields $l_{e^1_j}(\badNash,t) \leq l_{e^1_i}(\badNash,t) \leq l_{e^1_j}(\badNash,t)+1$.
\end{proof}

\lemmaOneLayerNew*
\begin{proof}
In the network loading at time $t$, we first add every player entering an edge to the queue and then remove the first player from each queue. For any $S \in \Nash$, we denote the sum of queue lengths in the network at time $t$ after the removal by $Q(S,t)$. When comparing $Q(S,t)$ to $Q(S,t-1)$ observe that we can obtain $Q(S,t)$ from $Q(S,t-1)$ as follows. We denote the number of players starting at time $t$ with $Q^+(S,t)$. For every player with starting time $t$, we increase the sum of queue lengths by one. Afterwards, the total queue length decreases by the number of used edges at time $t$, i.e., edges with a non-empty queue, denoted by $Q^-(S,t)$. Thus, we have $Q(S,t)= Q(S,t-1)+Q^+(S,t)-Q^-(S,t)$.
\paragraph{Claim:} $Q(\badNash,t) \geq Q(S,t)$ for all $S\in \Nash$ and for all $t\in \N_{>0}$.

\noindent \textit{Proof of claim.}
Suppose there exists a state $S\in \Nash$ and a time $t\in \N_{>0}$ such that $Q(\badNash,t) < Q(S,t)$. 
Consider the earliest point in time $t'$ at which this happens. Thus, we know that $Q(\badNash,t'-1) \geq Q(S,t'-1)$ and $Q^+(S,t')= Q^+(\badNash,t')$.
We conclude that $Q^-(\badNash,t') > Q^-(S,t')$, i.e., $\badNash$ uses strictly more edges at $t'$ than $S$ does.
We call an edge $e$ \emph{new} (at time $t'$) if the queue $q_e$ at time $t$ is empty, i.e., the first player entering the edge experiences no queuing time on this edge. Conversely, we call an edge \emph{old} (at time $t$) if it is not new at time $t$. Furthermore, the first player on a new edge does not contribute to $Q(S,t')$, since this player is counted in $Q^+(S,t')$ as well as in $Q^-(S,t')$. 

Hence, $\badNash$ uses more new edges than $S$. Let $e'$ be a new edge exclusively used by $\badNash$ at $t'$. By Lemma \ref{lemma_ali}, \badNash\ also uses all edges with transit time smaller than $\tau(e')$ at time $t'$. Furthermore, $S$ cannot use an edge with a larger transit time than $\tau(e')$ without contradicting the equilibrium property. Since $S$ has more players in queues, there has to be an edge $e''$ which has more players in its queue under $S$ than under $\badNash$ at time $t'$.

Due to Lemma \ref{lemma_ali}, we know that in $\badNash$ new edges are only used if each edge with a smaller index has a strictly higher 
workload. Additionally, in $S$, the queue on $e''$ is even longer than in \badNash. Summarizing, $l_{e''}(S,t')\ge l_{e'}(S,t')+2 = \tau(e')+2$.
Consequently, the last player on $e''$ in $S$, which experiences a latency of at least $\tau(e')+1$, can improve her latency by changing her strategy to $e'$. This contradicts the assumption that $S$ is an equilibrium.\qed

\medskip
\noindent Furthermore, we observe that the completion time of any player $i$ starting at $t+1$ is uniquely defined given the total queue length $Q(S,t)$ at time $t$ for any $S\in \Nash$, since the $Q(S,t)$-many players in queues level out the latencies on the shortest edges. This suffices to determine the latencies of the subsequent players. Additionally, the completion time of player $i$ is monotonically increasing in $Q(S,t)$. This yields $C_i(\badNash)\geq C_i(S)$ for all $S \in \Nash$ and all $i\in N$.
\end{proof}

\lemmaMnKStartpattern*
\noindent

\lemmaAddingEdges*
\noindent
\begin{proof}
It is sufficient to show that the property holds for $|E_2\backslash E_1|=1$. Let $e=(v_j,v_{j+1})$ be the edge where the two graphs differ. Up to node $v_j$ the arrival patterns $a_{v_j}(\badNash_{G_1})$ and $a_{v_j}(\badNash_{G_2})$ trivially coincide.
Suppose that there is a player $i$ such that $a^i_{v_{j+1}}(\badNash_{G_2}) > a^i_{v_{j+1}}(\badNash_{G_1})$ and $a^{i'}_{v_{j+1}}(\badNash_{G_2}) \leq a^{i'}_{v_{j+1}}(\badNash_{G_1})$ for all $i'<i$. In particular, there are at most $|\{e\in E_1^{j+1}: \tau(e) \leq a^i_{v_{j+1}}(\badNash_{G_1}) - a^i_{v_{j}}(\badNash_{G_1})\}| -1$ players with index less than $i$ that arrive at $v_{j+1}$ at time $a^i_{v_{j+1}}(\badNash_{G_1})$. Here $E_1^{j+1}$ denotes the edges in $E_1$ that are in the $(j+1)$-th layer of $G_1$. Since $i$ arrives at $v_{j+1}$ strictly later in $G_2$ than in $G_1$, among all edges of $\{e\in E_1^{j+1}: \tau(e) \leq a^i_{v_{j+1}}(\badNash_{G_1}) - a^i_{v_{j}}(\badNash_{G_1})\}$ players arrive at $v_{j+1}$ at time $a^i_{v_{j+1}}(\badNash_{G_1})$ in $\badNash_{G_2}$. Note, that is one more player. By Lemma \ref{lemma_order} the additional player $i'$ has $i'<i$, since no overtaking is possible at any intermediate node in any equilibrium. But then we have $a^{i'}_{v_{j+1}}(\badNash_{G_2}) > a^{i'}_{v_{j+1}}(\badNash_{G_1})$, which contradicts that player $i$ was the player with the smallest index that was delayed. Therefore, $a^i_{v_{j+1}}(\badNash_{G_1}) \geq a^i_{v_{j+1}}(\badNash_{G_2})$ holds for all $i\in N$.
In the later layers, the graphs are indistinguishable. Therefore, by iterating node by node to the destination node $d$, we obtain the statement by applying Lemma $\ref{lemma_mnK_startpattern}$ at each step.
\end{proof}


\subsection{Basic Notations for Flows over Time and Omitted Proof of Section \ref{sec:FlowOverTime}}\label{app:FlowOverTime}

Flows over time can essentially be seen as the continuous variant of packet routings. For a comprehensive introduction to the topic, we refer to the survey article of Skutella~\cite{Skutella2009survey}. We will follow its lines and recall the most basic definitions here. For a given graph $G$ with transit times $\tau(e)$ for the edges $e \in E$, a designated source $s$, destination $d$ and a fixed time horizon $T$, a flow over time is formally defined as follows.

\begin{definition}
A flow over time $f$ with time horizon $T$ consists of a Lebesgue-
integrable function $f_e : [0, T) \rightarrow \mathbb{R}_{\geq 0}$ for each arc $e \in E$. For all $\theta \geq T - \tau(e)$ it must hold that $fe(\theta) = 0$ for all $e \in E$.
\end{definition}

\begin{definition}
    A flow over time $f$ is called feasible if the following properties hold.
    \begin{enumerate}
    \item The flow over time $f$ fulfills the capacity constraints if $f_e(\theta) \leq \nu_e$ for
each $e \in E$ and almost all $\theta \in [0, T )$.
    \item The flow over time $f$ fulfills the weak flow conservation constraints if the amount of flow that has left a node $v \in V\setminus \{s,d\}$ is always upper bounded by the amount of flow that has entered the same node for all $\theta \in [0,T)$. Formally,
    \[\sum_{e \in \delta^-(v)}\int_0^{\theta - \tau(e)}f_e(\xi)\,d\xi \geq \sum_{e \in \delta^+(v)}\int_0^{\theta}f_e(\xi)\,d\xi\;,\]
    where $\delta^-(v)$ and $\delta^+(v)$ denote the set incoming and outgoing of edges of $v$, respectively.
    \end{enumerate}
\end{definition}
\noindent We say that the amount of flow that has entered $d$ up to time $T$ is the value of the flow $f$. Formally, $|f| = \sum_{e \in \delta^-(d)}\int_0^{T - \tau(e)}f_e(\xi)\,d\xi$. If the weak flow conservation constraint is fulfilled with equality for all $\theta$ and all $v \in V\setminus\{s,d\}$, we say the flow fulfills strong flow conservation. A very special and important class of flows over time are temporally repeated flows. For a feasible static flow $x$ on the same graph, let $(x_P)_{P\in \mathcal{P}}$ be a path decomposition of $x$. The temporally repeated flow $f$ sends flow at rate $x_P$ into $P$ from $s$ during the time interval $[0,T-\tau(P))$. Here, $\tau(P)\coloneqq \sum_{e \in P}{\tau(e)}$ denotes the length of the path $P$. Formally,

\begin{definition}
For a static flow $x$ with flow decomposition $(x_P)_{P \in \mathcal{P}}$ the corresponding temporally repeated flow $f$ with time horizon $T$ is defined by
\[f_e(\theta) \coloneqq \sum_{P \in P_e(\theta)} x_p \qquad \text{for }e=(v,w) \in E, \theta \in [0,T),\]
where $$P_e(\theta) \coloneqq \left\{P \in \mathcal{P} : e \in P \wedge \tau(P_{s,v}) \leq \theta \wedge \tau(P_{v,t}) < T-\theta \right\}.$$ Here, $\tau(P_{u,v})$ denotes the length of $P$ from $u$ to $v$.
\end{definition}
\noindent It is easy to see that temporally repeated flows are feasible and fulfill strong flow conservation, see~\cite{Skutella2009survey}.

A dynamic Nash equilibrium, also called a Nash flow over time, is a special flow over time. It is assumed, that flow appears with a fixed inflow rate at the source node. Similar to the discrete model, the flow is interpreted as an infinite amount of flow particles each deciding at the time of appearance at the source node on a shortest $s-d$ path in a dynamic Nash equilibrium. More formally, for any point in time $\theta \in \mathbb{R}_{\geq 0}$ a positive flow entering an edge $e \in E$ implies that $e$ lies on a currently fastest path to $d$. The makespan-$\PoA$ and makespan-$\PoS$ are defined analogously to the packet routing game, i.e., we compare the latest arrival time of any particle in a Nash flow over time to the arrival time in an optimal flow. Note that for Nash flows over time we even have $\PoA(\mathcal{G})=\PoS(\mathcal{G})$ since the arrival times of dynamic Nash flows over time are unique as recently shown by Olver et al.~\cite{DBLP:conf/focs/OlverSK21}.
For a formal introduction to Nash flows over time, we refer to the PhD theses of Laura Vargas Koch~\cite{PhDLaura} and Leon Sering~\cite{PhDLeon}.

\propositionFlowOverTimeLowerBound*
\begin{proof}
We adapt the graphs used in the proof of Theorem \ref{satz_PoAgeqee1} by omitting the first layer and introducing a constant inflow of $k$ at $s$. We interpret the $n$ players as a flow mass of $n$ similar to the interpretation used by Fleischer and Tardos~\cite{DBLP:journals/orl/FleischerT98}. It remains to argue that the obtained flow over time is indeed feasible and a Nash flow.

Starting with an arbitrary equilibrium $S_i\in \Nash$ of $\Gamma_i$, we construct the Nash flow over time as follows. We set $f_e(\theta)=1$ in the interval \mbox{$[t-1,t)$} whenever a packet leaves the queue $q_e$ of an edge $e=(u,v)$ at time $t$. The constructed flow over time is feasible as we only forward packets if they have already arrived at the corresponding node, thus we immediately respect weak flow conservation in the flow over time. Additionally, as there is at most one packet leaving any queue at a given time, thus we respect the capacity constraints. Regarding the Nash condition observe that at each discrete point in time, a standard edge of some layer is used if and only if all other standard edges are also used as the total number of packets is divisible by any \mbox{$j\in \{1,\ldots, k\}$}. Thus, the standard edges of a layer are used equally, and it is immediate that no standard edge has a strictly lower workload than some other standard edge of the same layer. Additionally, the special edges have strictly larger workloads and are never part of the shortest path network. Since the first omitted layer consists exclusively of standard edges with a transit time of one, we respect the inflow into the network. The completion time of the dynamic Nash equilibrium is by definition $C(S_i)$.

It remains to bound the completion time of an optimal flow over time in this instances. In series-parallel graphs, there is a temporally repeated flow over time that is also an earliest arrival flow, i.e., a flow that maximizes the flow that arrived at $d$ at any point in time, see~\cite{Skutella2009survey}. In particular, this flow is optimal for minimizing the completion time. By Proposition~\ref{thm:opt} the completion time of this earliest arrival flow and the optimal packet routing coincide up to one and we conclude in the limit $\PoA(\mathcal{G})\geq \frac{e}{e-1}$.
\end{proof}

\fi
\end{document}